\newcommand{\BibTeX}{\rm B\kern-.05em{\sc i\kern-.025em b}\kern-.08em\TeX}
\definecolor{darkgreen}{rgb}{0,0.6,0}         
\newcommand{\kibitz}[2]{\ifnum\Comments=1{\color{#1}{#2}}\fi}
\newcommand{\rmr}[1]{\kibitz{red}{[Reshef says:#1]}}
\newcommand{\dv}[1]{\kibitz{cyan}{[Dan says:#1]}}
\def\AV{\text{AV}}
\def\CC{\text{CC}}
\def\PAV{\text{PAV}}
\def\sPAV{\text{sPAV}}
\def\RX{\text{RX}}
\def\RXPAV{\text{RX-PAV}}
\def\RXeps{\text{RX-}$\epsilon$}
\begin{document}

\title{Welfare vs. Representation in Participatory Budgeting}
\author{Roy Fairstein\inst{1} \and
Reshef Meir\inst{2} \and Dan Vilenchik\inst{1} \and
Kobi Gal\inst{1,3}}

\institute{Ben-Gurion Univ. of the Negev, Israel \and
Technion, Israel Institute of Technology \and
University of Edinburgh, U.K.}

\maketitle

\begin{abstract}
Participatory budgeting (PB) is a democratic process for allocating funds to projects based on the votes of members of the community.  
Different rules  have been used to aggregate participants' votes.

A recent paper by  Lackner and Skowron~\cite{lackner2020utilitarian}
 studied the trade-off between notions of  social welfare and representation in the multi-winner voting, which is a special case of participatory budgeting with identical project costs.  
 But there is little understanding of this trade-off in the more general PB setting. 
This paper provides a theoretical and empirical study of the worst-case guarantees of several common rules to better understand the trade-off between social welfare and representation.  
We show that many of the guarantees from the multi-winner setting do not generalize to the PB setting, and that  the introduction of costs leads to substantially worse guarantees, thereby exacerbating the welfare-representation trade-off. 
We further study how the requirement of   \emph{proportionality} over voting rules effects the guarantees on social welfare and representation. We study the latter point also empirically, both  on real and synthetic datasets. 
We show  that   variants  of the recently suggested voting rule  Rule-X (which satisfies proportionality) do very well in practice both with respect to social welfare and representation.

\end{abstract}


\section{Introduction}

Participatory budgeting (PB) is gaining attention from both researchers and practitioners  and is actively in use in  cities around the world~\cite{pape2016budgeting, su2017porto, sintomer2008porto}. 
PB includes several steps. First,   a  list is suggested of feasible projects and  their estimated cost. Then,  citizens vote on which of the projects they would like to  fund. Finally, the votes are aggregated by a mechanism (voting rule) that selects a subset of the projects that get funded. The voting rule itself is not hidden from the voters, and they may strategize as they wish.

The voting rule is typically designed to optimize for certain  criteria, the most common are {\em social welfare} (the sum of utilities the voters get from the outcome. In our case, each voter get a utility of 1 for each approve project in the outcome), how many voters are represented in the final outcome, i.e. how many voters got at least one funded project that they voted for (This notation have different names in the literature, in this paper we will call it {\em representation}, same as \citet{lackner2020utilitarian}) and {\em proportionality} (each group in the population is represented in the final outcome according to its size). 
In many instances there is no way to simultaneously guarantee all criteria.\rmr{I don't think social welfare is usually considered as a fairness criterion.}

\paragraph{Running example}
Consider a city with three districts (see Fig.~\ref{fig:distirct}): district~A has 100 citizens, district~B has 90, and district~C has 10. The total budget is $\$1000$ and there are three types of projects: Diamonds (D) that cost $\$200$; Emeralds (E) that cost $\$150$ and Gold (G) that costs $\$100$.
In district~A there are two diamonds and six gold, in district~B there are three diamonds and three emeralds, and in district~C there is one emerald and on gold.
Each citizen approves all the projects  in his district, and no other project. 


The outcome with the optimal social welfare has a value of 800 and  contains all of district A's projects, but represents only 100 citizens which are 50\% of the population. The outcome with the optimal representation requires  a project from each district, but in this case the social welfare cannot exceed 790 (see second line in Table~\ref{tab:example}).
Interestingly, neither of these two outcomes satisfy proportionality, which requires to fund at least five projects from district A and three projects from district B. This proportional outcome have social welfare of 770 and represents of 95\% of the population.


By now, there are various well-known voting rules in the literature, such as  Approval Voting (which maximizes social welfare) or the Chamberlin–Courant  rule \cite{chamberlin1983representative} (which maximizes representation). The example in Figure \ref{fig:distirct} suggests that there is no one-size-fits-all solution, and the rules differ on  the fairness criteria that they guarantee, and their trade-offs. The performance of each rule is estimated both using theoretical analysis (typically a worst-case analysis) \citep{lackner2020utilitarian, skowron2021proportionality}, and data-driven experimental evaluation~\citep{lackner2020utilitarian}\dv{cite such papers}.

\citet{lackner2020utilitarian} studied the trade-off between social welfare and representation in a multi-winner setting, which is equivalent to PB  where all projects have unit costs.  They establish guarantees on the social welfare and representation for 12 voting rules from the literature.
In most of these results, the derivation of the bounds relies heavily on the assumption of identical costs, and hence do not readily extend to the general PB setting (or not at all, as we show for some rules). \rmr{no reason to jump ahead of yourself - you describe the main results in the next subsection not here}

Furthermore, it is of interest to understand the ``cost of proportionality'': Rather than ad-hoc analysis of specific voting rules that happen to satisfy proportionality, we would like to understand what is the inherent tradeoff in social welfare (or representation) that we must pay by requiring proportionality. Furthermore, some rules that satisfy proportionality in multi-winner setting do not satisfy it in the PB setting, stressing  the need for a general analysis.

\begin{figure}[t]
\begin{center}
\includegraphics[width=8cm]{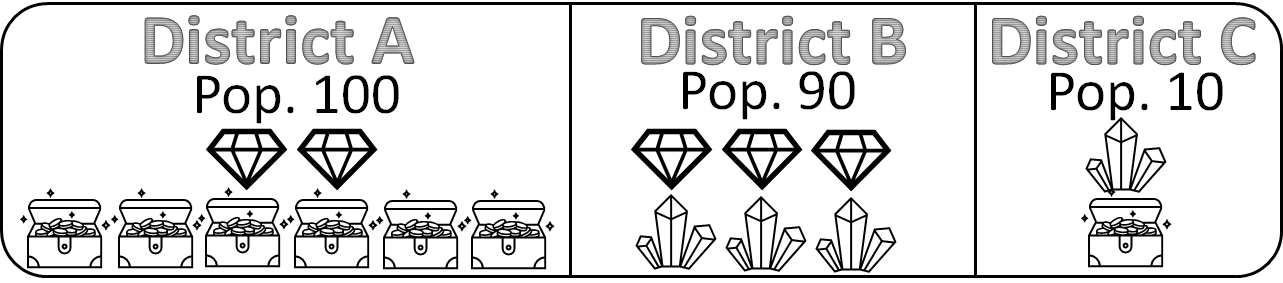}
\caption{Running Example of a Participatory Budgeting (PB) Instance. Three districts with three types of projects: diamonds (cost \$200), emeralds (\$150) and gold (\$100).
}\label{fig:distirct}
\end{center}\vspace{-5mm}
\end{figure}


    
\subsection{Our Contribution}
In this paper, we  extend the theoretical guarantees of \citet{lackner2020utilitarian} from the multi-winner setting to  participatory budgeting (PB)  and analyze the  trade-off between social welfare and representation for popular rules from the literature. In addition, we derive tight guarantees for a class of rules called \emph{proportional rules}. 
Those   guarantee a different notion of fairness than the one studied in \cite{lackner2020utilitarian}. The impatient reader can skip directly to Table~\ref{tab:guarantees_util} and Table~\ref{tab:guarantees_rep} to see a summary of our theoretical results.

 
Beyond the theoretical contribution described above, we are   first to compare the welfare-fairness tradeoff of several popular voting rules from the literature on real PB instances; 
We also evaluate the voting rules on two synthetically generated datasets, that allow us to demonstrate the intricate relationship between the different fairness criteria.

From  our theoretical and empirical results, we conclude that PAV continues to provide a good tradeoff between welfare and representation in the presence of project costs. But, while in practice PAV returns a proportional outcome for many instances, it does not guarantee to always do so. On the other hand, we show that sequential PAV becomes substantially worse. In addition, we propose two variants of Rule~X~\citep{peters2020proportional}, whose asymptotic guarantees of welfare and fairness are worse than PAV's, but in practice do just as well on average, while also guaranteeing  a proportional outcome.

Our results provide PB organizers with explainable recommendations on what voting rules are suggested to use, depending on the criteria they care about. 





\section{Related Work}\label{sec:related}

Multiple papers in the participatory budgeting literature focus on either social welfare, representation or proportionality. For example, \citet{goel2019knapsack} suggest using knapsack voting in order to improve the outcome social welfare, and
 \citet{jain2020participatory} consider special cases where it is possible to find a polynomial time algorithm which maximizes the social welfare. \citet{skowron2020participatory} suggest new PB voting rules and empirically evaluates their social welfare and representation. \rmr{any of these rules is in here? if not why not?}
 
 As for proportionality, there are many papers who deal with the subject, suggesting different definitions~\cite{peters2020proportional, fain2016core, fain2018fair, aziz2018proportionally, aziz2017justified, sanchez2017proportional, skowron2020participatory}. In this paper we will focus on specific definition for proportionality called Extended Justified Representation (EJR), which was defined by \citet{peters2020proportional} for PB, as EJR is both a strong requirement, and one that can be guaranteed. 
 
\citet{michorzewski2020price} consider  the trade-off between social-welfare and proportionality in divisible participatory budgeting, i.e., where it possible to fund parts of projects, instead of only entire projects in our case. \rmr{well this sounds very relevant! what do they say? what is the relation to your work?} 
\citet{skowron2021proportionality} analyzed the trade-off between welfare and proportionality in the multi-winner setting, showing for different voting rules the minimal welfare each cohesive group of voters is guaranteed.
\rmr{'talk about' is not an academic context. do they analyze it? run simulations? what are the insights? are they related to your work?} 

The purpose in this paper is to consider all three measurements at once in the PB context. Even though combining \cite{skowron2021proportionality,lackner2020utilitarian} provides a comparison between the three measurements for some rules, it is done in the multi-winner setting  where projects have a unit cost. Introducing different costs can have a significant effect on the results of voting rules which don't take it into account, thus affecting their guarantees. In addition, while some of the voting rules are guaranteed to give a proportional outcome in the multi-winner setting, this isn't correct anymore for PB. 


\section{Preliminaries}\label{sec:prem}
For any  $a\in \mathbb{N}$,  we use $[a]$ to denote $\{1,\ldots,a\}$.
A PB instance is a tuple $E=(A,cost,L)$, where:
\begin{itemize}
    \item Given a set $P\coloneqq\{p_1,\ldots,p_m\}$ of candidate projects, and $V=[n]$ a set of voters, the approval profile $A:V \to 2^P$ maps voter $i \in V$ to $A(i)$, the set of projects that voter $i$ approves.

    \item The mapping  $cost: P\rightarrow \mathbb{R}_+$ assigns a cost to every $p\in P$. The cost for a subset $T \subseteq P$ satisfies $cost(T)=\sum_{p\in T}cost(p)$.
    \item $L\in \mathbb{R}_+$ is the total budget. 
\end{itemize}
 Denote by $c_{\min}:=\min_{p\in P}cost(p);~c_{\max}:=\max_{p\in P}cost(p)$.  We say that $E$ is a multi-winner (MW) problem if $c_{min}=c_{max}=1$. 
 In addition we will use the following notation:
 \begin{itemize}
     \item $\mathcal{E}(N,M,L,c_{min},c_{max})$ is the set of all possible PB instances with $N$ voters, $M$ projects, budget $L$ and minimum and maximum project costs $c_{min},c_{max}$. The set $\mathcal{E} = \bigcup \mathcal{E}(N,M,L,c_{min},c_{max})$ is the union over all possible values of $N,M,L,c_{min},c_{max}$.
     
     \item A  bundle $B\subseteq P$ of projects  is {\em feasible} if  $cost(B)\leq L$. Given PB instance $E$, $S(E)$ is the set of feasible bundles w.r.t. $E$.
\item A {\em voting rule} is a function, such that $\forall E\in\mathcal{E}, R(E)\subseteq S(E)$. This function maps a PB instance $E$ to a set of feasible bundles, referred to as the
\emph{outcome} of the voting rule. 
\end{itemize}
Finally, we measure the outcome with two common metrics:
\begin{itemize}
\item The {\em social welfare} score of an approval profile $A$ with respect to a bundle $B$ is 
$SW(A,B)=\sum_{i\in V}|A(i)\cap B|$
\item The {\em representation } score of an approval profile $A$ with respect to a bundle $B$ is $RP(A,B)=\sum_{i\in V}min(1,|A(i)\cap B|)$
\end{itemize}

\subsection{Popular PB Voting Rules}\label{sec:vr}
We turn to describe several popular voting rules from the literature that we later analyze. 

\textbf{Approval Voting (\AV{})} The rule selects a feasible bundle $B\subseteq P$ that maximizes the social welfare $SW(A,B)$.

\textbf{Approval Chamberlin–Courant (\CC{})\cite{chamberlin1983representative}} The rule selects a feasible bundle $B\subseteq P$ that maximize 
$RP(A,B)$. 

The following rule change \AV{} such that the score achieved from a voter for projects decrease as more of his approved projects are funded. This way increasing the amount of voters that get represented.
\textbf{Proportional Approval Voting (\PAV{})\cite{orsted1894oversigt}} The rule selects a feasible bundle $B\subseteq P$ that maximizes the following score:\\
$$SC_{\PAV{}}(A,B)=\sum_{i\in V}\sum_{k=1}^{|A(i)\cap B|}\frac{1}{k}.$$


\textbf{Sequential-\PAV{} (\sPAV{})} Solving \PAV{} is NP-hard \cite{aziz2014computational}; Sequential \PAV{} is an efficient heuristic that proceeds as follows. Start with an empty bundle $B_0=\emptyset$; in iteration $i$ select a project $p\in P$, among all projects such that $B_{i-1} \cup \{p\}$ is feasible, that maximizes $SC_{\PAV{}}(A,B_{i-1} \cup \{p\})$. Set $B_{i}=B_{i-1} \cup \{p\}$. Repeat until no project $p$ can be added.

\medskip
{\em Remark:} The outcome of a voting rule may contain several optimal bundles and thus require some tie-breaking rule. We specify the approriate rule when needed.


 \begin{table}[t]
  \begin{center}
    \begin{tabular}{|l|c|c|c|}
    \hline
      & District~A & District~B & District~C\\
      \hline
      \AV{} & 2D , 6G & - & -\\
      \CC{} & 1D & 3D & 1E , 1G\\
      \PAV{} & 5G & 3E & -\\
      \sPAV{} & 5G & 3E & -\\
      Any EJR rule & 5G & 3E & -\\
      \hline
    \end{tabular}
    \caption{\label{tab:outcomes}Outcomes for the PB running example (D, E and G for diamond, emerald and gold respectively).}
  \end{center}
\end{table}

Table~\ref{tab:outcomes} shows the outcome of each voting rule when applied to the  running example. As shown by the table, the rules vary widely in their outcome. 
For example,  the voting rule \AV{} does not choose any of the projects in District $B$ and $C$, while $\CC{}$ chooses projects from all districts. 

\subsection{Proportional Voting Rules}\label{sec:propvr}

 
Proportional voting rules ensure  that sufficiently large groups of voters that share  a large set of approved projects must also receive a  fair amount of  projects in the outcome. The key is the notion of $T$-cohesive groups which  are  groups of voters that share a subset of projects $T$ 
and are able to  fund $T$ with the proportional part of the budget. Such groups are `entitled' to a fair representation in the outcome of the PB instance. Formally:


\begin{definition}[T-cohesive group~\cite{peters2020proportionality, peters2020proportional}]\label{def:cohis} A group of voters $S\subseteq V$ that jointly approves a set of projects $T\subseteq \cap_{i\in S}A(i)$ is $T$-cohesive if $\frac{L}{|V|}|S|\geq cost(T)$. 
\end{definition}\label{def:tco}

\begin{definition}[Extended Proportionality Representation (EJR)~\cite{peters2020proportionality, peters2020proportional}\footnote{Not to confuse with representation}]\label{def:EJR} 

A bundle $B$ for PB instance $E=(A,cost,L)$ satisfies EJR if for every  $T\subseteq P$ and every $T$-cohesive group $S$, it holds that there is $i\in S $ such that $|A(i)\cap B|\geq |T|$.
A voting rule $R$ satisfies EJR, if for every PB instance $E$, every bundle in the outcome $R(E)$ satisfies EJR.

\end{definition}

In our running example,  district~A voters  are $T$-cohesive for any set $T$ of 5 cheap projects (gold) in district~A, and district~B voters are $T$-cohesive for the set  $T$ of 3 cheap projects in district~B.  Any  voting rule that satisfies EJR must include 5 projects approved by district~A voters, and 3 projects approved by    district~B voters. As can be seen in Table~\ref{tab:outcomes}, both \PAV{} and \sPAV{} satisfy EJR on this example, while \AV{} and \CC{} do not. In general, none of the rules in Section \ref{sec:vr} are guaranteed  to satisfy EJR (as shown in the running example for \AV{} and CC, and for \PAV{}, \sPAV{} by \citet{peters2020proportional}).

 A well-known rule from the literature that satisfies the EJR  property was  suggested by Peters et al.~\citep{peters2020proportional}, and is called Rule~X (\RX{} for short). This voting will be used later in Section~\ref{sec:exp} as a representative of the EJR voting rules. Rule~X is not as simple to describe as the aforementioned rules, therefore, it will be described in detail in Appendix~\ref{app:rx}
 

\medskip
{\em Remark.} The EJR property does not require that a voting rule exhaust the entire budget. Any voting rule that does not exhaust the budget cannot achieve an optimal social welfare, as adding any project (that at least one voter approved) with the leftover budget will increase the welfare (and possibly the representation). There are many ways to make sure the voting rule exhaust the budget, e.g. \citet{peters2020proportional} do so by giving some very small gain to projects that voters did not approve, this way making sure that \RX{} outcome use the entire budget.


\subsection{Worst-Case Guarantees}\label{sec:wc}

We follow the definitions of \citet{lackner2020utilitarian} for utilitarian and representation guarantees. 
Given a participatory budgeting instance $E$, the
\emph{utilitarian ratio} of a voting rule $R$  for instance $E$ is the proportion of the social welfare given by $R$ (in this case, ties are broken according to the minimum social welfare over all bundles $B$ in the outcome of $R(E)$) divided by the optimal social welfare over all feasible bundles, the set $S(E)$.

\begin{equation}\label{eq:util_ratio}
K_{SW}^R(E) \coloneqq \frac{SW(A,R(E))}{\max_{B\in S(E)}SW(A,B)}
\end{equation}


The  worst-case \emph{utilitarian guarantee} of rule $R$ is  the minimal utilitarian ratio:
\begin{equation}K_{SW}^R(N,M,L,c_{min},c_{max}) \coloneqq \underset{E\in\mathcal{E}(N,M,L,c_{min},c_{max})}{\inf}K_{SW}^R(E)
\end{equation}

In the same way $K_{RP}^R(N,M,L,c_{min},c_{max})$ is the worst-case {\em representation guarantee} of rule $R$; this time ties are broken according to the bundle with the worse representation in $R(E)$.

When omitting one or more of the arguments $N,M,L,c_{min},c_{max}$ in $K_{RP}$ or $K_{SW}$, we are taking the infimum over these arguments. E.g. $K^R_{SW}(N,c_{min}):=\inf_{L,M,c_{max}}K^R_{SW}(N,M,L,c_{min},c_{max})$.



\vspace{1mm}

Table~\ref{tab:example} shows the social welfare and representation ratios for our running example. By definition, the utilitarian guarantee of \AV{} and representation guarantee of CC equal to 1.
 
 \begin{table}[t]
  \begin{center}
    \begin{tabular}{|l|c|c|c|c|c|}
    \hline
      & \AV{}  & CC & \PAV{} & \sPAV{} & Any EJR voting rule\\
      \hline
      SW & 800 & 390 & 770 & 770 & 770 \\
      RP & 100 & 200 & 190 & 190 & 190 \\
      \hline
      $K_{SW}^R(E)$ & 1 &	0.4875 &	0.9625 &	0.9625 &	0.9625\\
      $K_{RP}^R(E)$ & 0.5 &	1 &	0.95 &	0.95 &	0.95\\
      \hline
    \end{tabular}
    \caption{\label{tab:example}The welfare~/~representation scores and their ratios achieved by different voting rules for the running  example (ratios are defined in Section \ref{sec:wc})} 
  \end{center}
\end{table}


\section{Worst-case Guarantees of PB Voting Rules}\label{sec:guar}

\begin{table*}[ht!]
  \begin{center}
    \begin{tabular}{|c|cc|cc|}
    \hline
    &\multicolumn{2}{c|}{Participatory budgeting}& \multicolumn{2}{c|}{Multi-winner}\\
     Rule & Lower & Upper & Lower & Upper\\
      \hline
      & & \\[\dimexpr-\normalbaselineskip+3pt]
     \AV{} & 1 & 1 & 1& 1\\[0.05cm]

    CC & $\Omega\left(\frac{1}{L}\right)$ (Prop.~\ref{prop:cc_low})& $O\left(\frac{1}{L}\right)$ (Prop.~\ref{prop:cc_high}) & $\Omega\left(\frac{1}{L}\right)$ & $O\left(\frac{1}{L}\right)$  \\[0.2cm]

     \PAV{} & $\Omega\left(\frac{\log(L)}{L}\right)$  (Prop.~\ref{prop:pav_util_low})& $O\left(\frac{\log(L)}{L}\right)$ (Prop.~\ref{prop:pav_util_high})& $\Omega\left(\frac{1}{\sqrt{L}}\right)$   & $O\left(\frac{1}{\sqrt{L}}\right)$ \\[0.2cm]

     \sPAV{} & $\Omega\left(\frac{1}{NL}\right)$ (Prop.~\ref{prop:seqpav})& $O\left(\frac{1}{N}\right)$ (Prop.~\ref{prop:seqpav})& $\Omega\left(\frac{1}{\sqrt{L}}\right)$   & $O\left(\frac{1}{\sqrt{L}}\right)$ \\[0.2cm]

\hline\hline     
& & \\[\dimexpr-\normalbaselineskip+3pt]
          EJR rules & $\Omega\left(\frac{1}{Nc_{max}}\right)$ (Prop.~\ref{prop:ejr_util_low})& $O\left(\frac{1}{\sqrt{N}}\right)$ (Prop.~\ref{prop:ejr_util_high})& $\Omega\left(\frac{1}{N}\right)$(Prop.~\ref{prop:mw_util_low})& $O\left(\frac{1}{\sqrt{L}}\right)$ \\[0.2cm]

     \hline
    \end{tabular}
  \caption{Utilitarian guarantees for PB and multi-winner for rules studied in the paper, as a function of the budget ($L$), the number of voters ($N$) and the highest project cost $c_{max}$. We assume w.l.o.g. that the cost of the cheapest project is $1$. \rmr{You can replace the \cite{lackner2020utilitarian} with a \# and explain in the caption, to avoid something like $\frac12 [12]$. Also, the \AV{} bound does not need a reference?}
  All of the multi-winner guarantees are taken from \citet{lackner2020utilitarian}.
  }\label{tab:guarantees_util}
  \end{center}
\end{table*}

\begin{table*}[ht!]
  \begin{center}
    \begin{tabular}{|c|cc|cc|}
    \hline
    &\multicolumn{2}{c|}{Participatory budgeting}& \multicolumn{2}{c|}{Multi-winner}\\
     Rule & Lower & Upper & Lower & Upper\\
      \hline
      & & \\[\dimexpr-\normalbaselineskip+3pt]
       \AV{} & $\Omega\left(\frac{1}{Lc_{max}}\right)$ (Prop.~\ref{prop:av_low})& $O\left(\frac{1}{L}\right)$ (Prop.~\ref{prop:av_high}) & $\Omega\left(\frac{1}{L}\right)$  & $O\left(\frac{1}{L}\right)$ \\[0.15cm]
     
     \CC{} & 1 & 1 & 1 & 1\\[0.1cm]
     
     \PAV{} &  $\Omega\left(\frac{1}{\log(L)}\right)$ (Prop.~\ref{prop:pav_rep_low})  & $O\left(\frac{1}{\log(L)}\right)$ (Prop.~\ref{prop:pav_rep_high}) & $\frac{1}{2}$  & $\frac{1}{2} + O\left(\frac{1}{L}\right)$ \\[0.2cm]

     \sPAV{} &  $\Omega\left(\frac{1}{N}\right)$  (Prop.~\ref{prop:seqpav}) & $O\left(\frac{1}{N}\right)$ (Prop.~\ref{prop:seqpav}) & $\Omega\left(\frac{1}{\log(L)}\right)$   & $\frac{1}{2} + O\left(\frac{1}{L}\right)$ \\[0.2cm]
     
  \hline\hline  
  & & \\[\dimexpr-\normalbaselineskip+3pt]
     EJR rules &  $\Omega(\frac{1}{N})$ (Prop.~\ref{prop:ejr_rep}) & $O(\frac{1}{N})$ (Prop.~\ref{prop:ejr_rep}) & $\Omega\left(\frac{1}{N}\right)$(Prop.~\ref{prop:mw_rep_low})& $\frac{3}{4} + O\left(\frac{1}{L}\right)$ \\[0.2cm]

     \hline
    \end{tabular}
  \caption{Representation (bottom) guarantees for PB and multi-winner for rules studied in the paper, as a function of the budget ($L$), the number of voters ($N$) and the highest project cost $c_{max}$. We assume w.l.o.g. that the cost of the cheapest project is $1$. The multi-winner guarantees are taken from \citet{lackner2020utilitarian}.
  }\label{tab:guarantees_rep}
  \end{center}
\end{table*}

In this section, we describe our first contribution of computing  the worst case welfare and representation guarantees for voting rules from Section \ref{sec:prem}.  We then compute the worst case guarantees for the family of rules that satisfy the EJR property.
Table~\ref{tab:guarantees_util} and Table~\ref{tab:guarantees_rep} show a summary of all our theoretical guarantees, side-by-side with the results of Lackner and Skowron~\cite{lackner2020utilitarian} for multiwinner voting. 

This section will feature the lower bound on social welfare guarantee of \PAV{} and the lower and upper bounds on \sPAV{}. The proofs of the other bounds and rules are left for the appendix. The proofs in the main body of the paper represent the spirit of how we derive lower bounds (general argumentation) and upper bounds (construction of a certain PB).

Notice that proofs from \citet{lackner2020utilitarian} for the MW setting rely heavily on the fact that costs are uniform, which fail in the PB setting.There are a few proofs that follow the same outlines as \citet{lackner2020utilitarian} and we shall point this out.

\subsection{Common Voting Rules}


We start with the guarantees for \AV{}, \CC{}, \PAV{}  and \sPAV{}.

\begin{proposition}\label{prop:pav_util_low}
$\forall L,c_{min}:\ K^{\PAV{}}_{SW}(L,c_{min})\geq \frac{c_{min}}{L}\log\left(\frac{L}{c_{min}}\right)$
\end{proposition}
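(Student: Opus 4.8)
The plan is to sandwich the social welfare of PAV's outcome between its own PAV-score and that of the welfare-optimal bundle, exploiting two elementary properties of the harmonic numbers $H_k := \sum_{j=1}^{k}\frac1j$. Fix any instance $E$ with minimum cost $c_{min}$ and budget $L$, let $B^\star$ be a welfare-maximizing feasible bundle, and let $B=\PAV{}(E)$ be the (worst-welfare) bundle returned by PAV. Since every feasible bundle contains at most $K:=\lfloor L/c_{min}\rfloor$ projects, for each voter $i$ both counts $a_i:=|A(i)\cap B|$ and $a_i^\star:=|A(i)\cap B^\star|$ lie in $\{0,\dots,K\}$.

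First I would record the two harmonic inequalities I need, used in \emph{opposite} directions. Because each of the $k$ summands of $H_k$ is at most $1$, we have $H_k\le k$, hence $SC_{\PAV}(A,B)=\sum_{i} H_{a_i}\le\sum_{i} a_i=SW(A,B)$. Because $H_k/k$ is non-increasing in $k$ (the added term $\frac1{k+1}$ sits below the running average $\frac{H_k}{k}$, since $\frac{k}{k+1}\le 1\le H_k$), for any $a\le K$ we get $H_a\ge \frac{H_K}{K}\,a$; summing over voters yields $SC_{\PAV}(A,B^\star)\ge \frac{H_K}{K}\,SW(A,B^\star)$.

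Now I would chain these with the defining optimality of PAV, namely $SC_{\PAV}(A,B)\ge SC_{\PAV}(A,B^\star)$, which holds for \emph{every} bundle in the outcome and so is unaffected by the worst-welfare tie-break:
\[
SW(A,B)\ \ge\ SC_{\PAV}(A,B)\ \ge\ SC_{\PAV}(A,B^\star)\ \ge\ \frac{H_K}{K}\,SW(A,B^\star),
\]
so $K_{SW}^{\PAV}(E)\ge H_K/K$. It then remains to lower-bound $H_K/K$: from $H_K\ge\log(K+1)\ge\log(L/c_{min})$ (comparing the sum to $\int_1^{K+1}\frac{dx}{x}$ and using $K+1>L/c_{min}$) together with $1/K\ge c_{min}/L$, I obtain $H_K/K\ge \frac{c_{min}}{L}\log(L/c_{min})$. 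Since this holds for every instance with the prescribed $c_{min}$ and $L$ regardless of $N,M,c_{max}$, taking the infimum over instances gives the stated guarantee.

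The only genuine content is the monotonicity of $H_k/k$ and the idea of applying the two harmonic bounds in opposite directions so that PAV-optimality can be inserted in the middle of the chain; the rest is bookkeeping. The two points to watch are that both $a_i$ and $a_i^\star$ are truly capped by $K$ (which follows from feasibility and $c_{min}$ being the cheapest cost), and that the tie-breaking rule only lowers $B$'s welfare while preserving its PAV-optimality, so the displayed chain still applies to the worst bundle in the outcome.
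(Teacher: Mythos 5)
Your proof is correct, and its outer skeleton is exactly the paper's: both arguments insert PAV-optimality into the chain $SW(A,B_{\PAV{}}) \geq SC_{\PAV{}}(A,B_{\PAV{}}) \geq SC_{\PAV{}}(A,B^\star) \geq \alpha\, SW(A,B^\star)$ and then compute $\alpha$. Where you genuinely diverge is in how the last inequality---the heart of the matter, which the paper isolates as a standalone lemma stating $SC_{\PAV{}}(A,B)/SW(A,B) \geq \frac{c_{min}}{L}\log\left(\frac{L}{c_{min}}\right)$ for every feasible bundle---is established. The paper splits voters according to whether $|A(i)\cap B|$ equals $1$ or exceeds $1$, applies $H_k \geq \log k$, and then solves a relaxed optimization problem over the multiset of intersection sizes, arguing via a derivative computation and an integrality discussion that the ratio is minimized at the most unbalanced configuration (one voter absorbing all the welfare). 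You instead prove the per-voter inequality $H_a \geq \frac{H_K}{K}\,a$ for all $a \leq K = \lfloor L/c_{min}\rfloor$, which follows from the monotonicity of $H_k/k$, and simply sum over voters. Your route is more elementary (no case split on $|A(i)\cap B|=1$, no relaxation, no extremal argument), it sidesteps the somewhat informal steps in the paper's optimization (e.g., asserting the minimum occurs at the unbalanced extreme), and it yields a marginally sharper constant, since $H_K/K \geq \frac{\log(K+1)}{K} \geq \frac{c_{min}}{L}\log\left(\frac{L}{c_{min}}\right)$. What the paper's longer argument buys is structural insight: it pinpoints the worst case as the ``single voter approves an entire cheap bundle'' configuration, which is precisely the instance recycled for the matching upper bound in Prop.~\ref{prop:pav_util_high}. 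Your explicit treatment of tie-breaking (every bundle in the outcome attains the maximal PAV score, so the chain applies to the worst-welfare bundle) is also correct, and is handled more carefully than in the paper.
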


The heart of the proof lies in the following technical lemma, whose proof is deferred to right after the proof of this claim.

\begin{lemma}\label{lemma:pav}
For any feasible bundle $B$ of projects and any approval profile $A$ it holds $\frac{SC_{\PAV{}}(A,B)}{SW(A,B)}\geq \frac{c_{min}}{L}\log\left(\frac{L}{c_{min}}\right)$.
\end{lemma}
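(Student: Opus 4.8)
The plan is to pass to a per-voter quantity and reduce the claim to an elementary statement about harmonic numbers. First I would write $t_i := |A(i)\cap B|$ for each voter $i$, and let $H_t := \sum_{k=1}^{t}\frac{1}{k}$ denote the $t$-th harmonic number (with $H_0 = 0$). With this notation $SW(A,B) = \sum_{i\in V} t_i$ and $SC_{\PAV{}}(A,B) = \sum_{i\in V} H_{t_i}$, so the quantity to bound is exactly the ratio of sums $\bigl(\sum_i H_{t_i}\bigr)\big/\bigl(\sum_i t_i\bigr)$.

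The key structural fact is that feasibility caps every $t_i$. Since each project costs at least $c_{min}$ and $cost(B)\le L$, the bundle contains at most $|B|\le L/c_{min}$ projects, hence $t_i \le M := \lfloor L/c_{min}\rfloor$ for every $i$. Next I would establish that $t\mapsto H_t/t$ is non-increasing on the positive integers; this is a one-line check, since $\frac{H_{t+1}}{t+1}\le \frac{H_t}{t}$ is equivalent to $\frac{t}{t+1}\le H_t$, which holds because $H_t\ge H_1 = 1$. Combining this monotonicity with the elementary mediant inequality $\frac{\sum_i a_i}{\sum_i b_i}\ge \min_{i:\,b_i>0}\frac{a_i}{b_i}$ (applied with $a_i = H_{t_i}$ and $b_i = t_i$, discarding the voters with $t_i=0$, which contribute $0$ to both sums) gives
\[
\frac{SC_{\PAV{}}(A,B)}{SW(A,B)} = \frac{\sum_i H_{t_i}}{\sum_i t_i} \ge \min_{i:\,t_i>0}\frac{H_{t_i}}{t_i} \ge \frac{H_M}{M}.
\]

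It then remains to compare $H_M/M$ with the target $\frac{c_{min}}{L}\log\frac{L}{c_{min}}$. Here I would use the standard integral lower bound $H_M = \sum_{k=1}^{M}\frac{1}{k} \ge \int_1^{M+1}\frac{dx}{x} = \log(M+1)$, together with $M+1 = \lfloor L/c_{min}\rfloor + 1 > L/c_{min}$, to obtain $H_M > \log\!\bigl(\frac{L}{c_{min}}\bigr)$ (the $\log$ being the natural logarithm, which is the base the harmonic bound forces). Since also $M\le L/c_{min}$ yields $\frac{1}{M}\ge \frac{c_{min}}{L}$, multiplying these two nonnegative bounds gives $\frac{H_M}{M}\ge \log\!\bigl(\frac{L}{c_{min}}\bigr)\cdot\frac{c_{min}}{L}$, which is precisely the claim.

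I expect the load-bearing step to be the monotonicity of $H_t/t$ combined with the mediant inequality, as that is what lets me replace the unknown distribution of the $t_i$ by their worst-case common value $M$; the remaining difficulty is bookkeeping rather than depth — keeping the direction of every inequality correct through the chain, and handling the floor carefully, since the strict inequality $M+1 > L/c_{min}$ is exactly what preserves the full $\log$ term. Finally I would dispose of the degenerate cases: if $L<c_{min}$ no feasible bundle funds any project, whenever $SW(A,B)=0$ the ratio is vacuous, and for $L=c_{min}$ the right-hand side is $0$ so the bound is trivial.
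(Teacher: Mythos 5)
Your proof is correct, and it takes a genuinely different route from the paper's. The paper splits the voters into those with $|A(i)\cap B|=1$ and those with $|A(i)\cap B|>1$, lower-bounds each harmonic number by a logarithm, drops the singleton voters from the ratio, and then solves a relaxed continuous optimization problem over the size profile $(q_1,\dots,q_T)$ with $q_i\ge 2$ and $\sum_i q_i=Q$, arguing via an unbalancing argument and a derivative computation in $T$ that $\log\bigl(\prod_i q_i\bigr)/Q$ is minimized when a single voter carries all of $Q$; only then does it invoke $|B|\le L/c_{min}$ and the decay of $\log(x)/x$. You replace that entire reduction by two elementary facts --- the map $t\mapsto H_t/t$ is non-increasing on the positive integers, and the mediant inequality $\frac{\sum_i a_i}{\sum_i b_i}\ge\min_i\frac{a_i}{b_i}$ --- which take you in one step from the aggregate ratio to the per-voter worst case $H_M/M$ with $M=\lfloor L/c_{min}\rfloor$. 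What your route buys is rigor and uniformity: voters with $t_i\in\{0,1\}$ need no special device, whereas in the paper the second inequality of Eq.~\eqref{eq:BB} degenerates to $0/0$ when no voter approves two funded projects, and the final steps implicitly use that $\log(x)/x$ is decreasing, which fails below $x=e$; these are edge cases your argument never meets. Moreover, retaining $H_M\ge\log(M+1)$ instead of passing to logarithms per voter gives a marginally sharper constant and makes the floor bookkeeping painless. What the paper's route buys in exchange is an explicit description of the extremal configuration --- a single voter approving a full bundle of $\lfloor L/c_{min}\rfloor$ cheapest projects --- which is essentially the instance reused for the matching upper bound in Prop.~\ref{prop:pav_util_high}; your proof pins down the same extremal structure (all $t_i$ pushed to $M$) but leaves it implicit.
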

\begin{proof}[Proof of Prop.~\ref{prop:pav_util_low}]
Given a PB instance $E$, we denote by $B_{SW}$ the bundle with largest SW, and $B_{\PAV{}}$ the one with largest \PAV{} score. From Lemma~\ref{lemma:pav} the following holds:

\begin{align*}
SW(A,B_{\PAV{}}) &\geq SC_{\PAV{}}(A,B_{\PAV{}}) \geq SC_{\PAV{}}(A,B_{SW}) 
\geq \frac{c_{min}}{L}\log\left(\frac{L}{c_{min}}\right)SW(A,B_{SW}),
\end{align*}

Which entails:

$$\frac{SW(A,B_{\PAV{}})}{SW(A,B_{SW})} \geq \frac{\frac{c_{min}}{L}\log\left(\frac{L}{c_{min}}\right)SW(A,B_{SW})}{SW(A,B_{SW})} = \frac{c_{min}}{L}\log\left(\frac{L}{c_{min}}\right),$$
as required.
\end{proof}

\begin{proof}[Proof of Lemma~\ref{lemma:pav}]

We will use the following notation: Let $B_i = A(i)\cap B$ be the projects in  bundle $B$ that voter $i$ approves, $V(B)$ is the set of  voters with $|B_i|>1$ and $V1(B)$ are all voters with $|B_i|=1$. 

The harmonic sum $H(k)=1 + \frac{1}{2} + \ldots + \frac{1}{k}$ is at least $\log(k)$ (in base $e$), therefore, for any bundle $B$ and approval profile $A$,

\rmr{you should say somewhere that all logs are in base $e$. Note caveat after Eq.~\eqref{eq:temp1}}
\begin{equation}\label{eq:pav_lim}
\begin{split}
SC_{\PAV{}}(A,B) & = \sum_{i\in V}H(B_i) = \sum_{i\in V(B)}H(B_i) + |V1(B)| \\
 & \geq  \sum_{i\in V(B)}\log(|B_i|) + |V1(B)| 
  = \log(\prod_{i\in V(B)}|B_i|) + |V1(B)|
\end{split}
\end{equation}

And the welfare of $B$ is:

\begin{equation}\label{eq:pav_sw_eq}
SW(A,B) = \sum_{i\in V}|B_i| = \sum_{i\in V(B)}|B_i| + |V1(B)|
\end{equation}

From Eq.~\eqref{eq:pav_lim} and Eq.~\eqref{eq:pav_sw_eq} we have:

\begin{equation}
\frac{SC_{\PAV{}}(A,B)}{SW(A,B)}  \geq  \frac{\log(\prod_{i\in V(B)}|B_i|) + |V1(B)|}{\sum_{i\in V(B)}|B_i| + |V1(B)|}   \geq  \frac{\log(\prod_{i\in V(B)}|B_i|)}{\sum_{i\in V(B)}|B_i|}\label{eq:BB}
\end{equation}
We now want to find a lower bound on the right hand side of the last equation. Note that the lower bound does not depend on the actual $B_i$'s but only on their size. 

Intuitively, we want to show that the lowest value is obtained when the sets sizes' are most unbalanced---essentially when there is only one nonempty set.

For this, we solve the following optimization problem instead, which is a relaxation of the above problem. Let $T= |V(B)|$ and $Q=\sum_{i \in V(B)} |B_i|$. Define the convex set $$\mathcal{C} = \{(q_1,\ldots,q_T) \text{ s.t. } \forall i\, q_i \ge 2 \text{ and } \sum_{i=1}^T q_i = Q\}.$$
Using the notation, for any bundle $B$, the right hand side in Eq.~\eqref{eq:BB} is lower bounded by \rmr{min -> inf?}

\begin{align}\underset{\mathcal {C}}{\inf} \frac{\log(\prod_{i\in T}q_i)}{\sum_{i\in T}q_i} = \underset{\mathcal {C}}{\inf}  &\frac{\log(\prod_{i\in T}q_i)}{Q}.\label{eq:temp1}
\end{align}

The product of the $q_i$'s is minimal when the distribution of $q_i$'s is the most unbalanced. By setting the minimal value $q_i=2$ for all $i>1$, and $q_1=Q-2(T-1)$, we get

$$ \eqref{eq:temp1} \ge \inf_{T}\frac{\log (q_1 2^{T-1})}{Q} = \inf_T\frac{(T-1)\log 2+\log (Q-2(T-1))}{Q}.$$

Taking the derivative w.r.t. $T$, we get that this is a convex function with a maximum at $T=\frac{Q}{2}+1-\frac{1}{\log 2}>\frac{Q}{2}-1$. However $Q\geq q_1-2(T-1)\geq 2-2(T-1)=2T$, i.e. $T \leq \frac{Q}{2}$ so the only possible integer solutions are $T=1$ and $T=\frac{Q}{2}$, which map to $\frac{\log Q}{Q}$ and $\frac{\log 2}{2}$, respectively.  

Hence the minimum is obtained at  $T=1$, which means that $q_1 = Q$ (the solution of the relaxed problem is also a valid solution of our original problem). Back to bundle problem, $T=1$ entails $|V(B)|=1$ and $|B_1|=Q$. Plugging this back into Eq.~\eqref{eq:BB},

\begin{equation}\label{eq:pav_minimal_B}
\underset{B}{\arg\min} \frac{SC_{\PAV{}}(A,B)}{SW(A,B)} \geq \frac{\log(|B_1|)}{|B_1|} \geq \frac{\log(|B|)}{|B|}
\end{equation}

Since the  size of any feasible bundle is at most $\lfloor\frac{L}{c_{min}}\rfloor\leq\frac{L}{c_{min}}$, we get from Eq.~\eqref{eq:pav_minimal_B}:

\begin{equation}
\underset{B}{\arg\min} \frac{SC_{\PAV{}}(A,B)}{SW(A,B)} \geq \frac{c_{min}}{L}\log(\frac{L}{c_{min}})
\end{equation}


This completes the proof of Lemma \ref{lemma:pav}.
\end{proof}




\begin{figure}[t]
\begin{center}
\includegraphics[width=8cm]{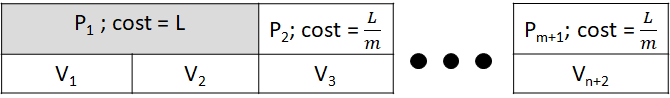}
\caption{PB instance for Prop.~\ref{prop:seqpav} showing projects $p_1, p_2, \dots p_{m+1}$ and voters $v_1, v_2, \ldots, V_{n+1}$. Each voter appears below its approved projects. In this example the  \sPAV{}  selects project $p_1$.\rmr{I would put a bit more energy in the figure... add the project costs and/or show the winning bundles}
}\label{fig:sPAV}
\end{center}\vspace{-5mm}
\end{figure}

\begin{proposition}\label{prop:seqpav}
$$\forall N,L,c_{min}:\
\frac{c_{min}}{NL} \leq K^{\sPAV{}}_{SW}(N,L,c_{min}) \leq \frac{2}{N}$$
$$\forall N:\ \frac{1}{N}\leq K^{\sPAV{}}_{RP}(N)\leq \frac{2}{N}$$
\end{proposition}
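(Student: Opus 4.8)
The plan is to prove the two lower bounds by a single uniform counting argument and the two upper bounds by a single worst‑case construction (the instance sketched in Fig.~\ref{fig:sPAV}), so I would organize the proof as ``both lower bounds, then both upper bounds.''

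\textbf{Lower bounds.} First I would observe that \sPAV{} never returns an empty bundle as long as some approved project is affordable: in the first greedy step the marginal $SC_{\PAV}$‑gain of a project over $B_0=\emptyset$ equals its number of approvers, which is at least $1$ for any approved project, so the project \sPAV{} adds contributes $SW\ge 1$ and $RP\ge 1$; subsequent steps only increase these. On the optimization side, any feasible bundle contains at most $\lfloor L/c_{min}\rfloor\le L/c_{min}$ projects, so for every instance $\max_{B}SW(A,B)\le N\cdot L/c_{min}$, while trivially $\max_B RP(A,B)\le N$. Dividing the \sPAV{} value by the optimum then yields $K^{\sPAV{}}_{SW}(N,L,c_{min})\ge \tfrac{1}{NL/c_{min}}=\tfrac{c_{min}}{NL}$ and $K^{\sPAV{}}_{RP}(N)\ge \tfrac1N$ (degenerate instances where no approved project is affordable have ratio defined as $1$). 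Both lower bounds are thus just the ratio of ``at least one'' to ``at most everything.''

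\textbf{Upper bounds.} For both upper bounds I would exhibit one family of instances. Take $N$ voters and $c_{min}=1$; choose the budget $L\ge N$ (for the representation bound $L$ is itself infimized, so I simply set $L=N$). Introduce a single \emph{decoy} project $p_1$ approved by exactly the two voters $v_1,v_2$ with $cost(p_1)=L$, together with $N$ private unit‑cost projects $q_1,\dots,q_N$, where $q_j$ is approved only by $v_j$. I would then trace the greedy run: at the first step the marginal $SC_{\PAV}$‑gain of $p_1$ is $2$ while each $q_j$ gives only $1$, so \sPAV{} \emph{strictly} prefers and selects $p_1$ (no tie‑breaking is needed). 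Since $cost(p_1)=L$ exhausts the budget, no unit‑cost project is feasible afterwards and \sPAV{} halts with $B=\{p_1\}$, giving $SW(A,\sPAV{})=RP(A,\sPAV{})=2$. Meanwhile $\{q_1,\dots,q_N\}$ is feasible (cost $N\le L$) and attains $SW=RP=N$, so the welfare‑optimal and representation‑optimal values are both $\ge N$. Hence both ratios are at most $2/N$, establishing $K^{\sPAV{}}_{SW}(N,L,c_{min})\le \tfrac2N$ and $K^{\sPAV{}}_{RP}(N)\le \tfrac2N$.

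\textbf{Main obstacle.} The delicate part is engineering the cost structure so that three requirements hold simultaneously: (i) the decoy is the \emph{unique} greedy maximizer at step one, (ii) it exhausts the budget so \sPAV{} cannot recover by later adding cheap projects, and (iii) the optimum can still serve every voter through the private projects. This is exactly why the decoy must cover $2>1$ voters --- to strictly beat the private projects' marginal and avoid any tie‑breaking argument --- while being priced at the full budget $L$. This is the qualitatively new PB phenomenon underlying the jump from the multi‑winner rate $\Theta(1/\sqrt{L})$ to $\Theta(1/N)$: because $SC_{\PAV}$ ignores cost and only respects feasibility, a single over‑priced yet popular‑enough project can derail the entire sequential run, something impossible in the unit‑cost setting. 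I would finally note that the explicit constant $2/N$ for the welfare bound reflects the large‑budget regime $L\ge N$ (the optimum needs $\Omega(N)$ budget to spread welfare $N$ across projects whose approver counts are too small to tempt \sPAV{}); smaller budgets only help \sPAV{}, so the asymptotic $O(1/N)$ claimed in Table~\ref{tab:guarantees_util} and Table~\ref{tab:guarantees_rep} holds throughout.
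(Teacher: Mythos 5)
Your proposal is correct and follows essentially the same route as the paper: the lower bounds via the trivial counting argument (\sPAV{} funds at least one approved project, while the optimum is at most $N\cdot L/c_{min}$ for welfare and $N$ for representation), and the upper bounds via a single budget-exhausting decoy project approved by exactly two voters alongside cheap private projects, which \sPAV{} greedily selects and then halts. Your variant in which the two decoy voters also own private projects is a minor but genuine tightening: it makes the optimum $N$ rather than $N-2$, so the ratio is exactly $\frac{2}{N}$, whereas the paper's construction (where those two voters approve only the decoy) only yields $\frac{2}{N-2}$, which matches the claimed bound merely asymptotically.
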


\begin{proof}
Consider the PB instance presented in Figure~\ref{fig:sPAV} with a budget of $L$. There are $N=n+2$ voters $\{v_1,\ldots,v_{n+2}\}$ and $M=m+1$ projects $\{m_1,\ldots,m_{m+1}\}$.

The first 2 voters approve project $p_1$ that costs $L$ and the rest of the voters approve one project each, at a cost of $\frac{L}{m}$.

At its first iteration, \sPAV{} chooses $p_1$, adding 2 to the score, while the addition of any other project adds 1. Therefore, \sPAV{} will fund $B_{\sPAV{}} = \{p_1\}$ and stop,\rmr{what is $p_x$??} having an outcome with welfare of 2. The bundle $B_{SW}=B_{RP}=\{p_2,\ldots,p_{m+1}\}$ maximizes both  welfare and representation with value  $n$. Putting things together we get that

\begin{equation}
\frac{SW(A,B_{\sPAV{}})}{SW(A,B_{SW})} = \frac{RP(A,B_{\sPAV{}})}{RP(A,B_{\CC{}})}= \frac{2}{n} = \frac{2}{(N-2)} \geq \frac{2}{N}
\end{equation}

As for the lower bound, any voting rule will fund at least one project and for any instance there can be at most $\lfloor\frac{L}{c_{min}}\rfloor$ projects funded that all voters want, therefore:

\begin{equation}
\frac{SW(A,B_{\sPAV{}})}{SW(A,B_{SW})} \geq \frac{1}{N\lfloor\frac{L}{c_{min}}\rfloor} \geq \frac{c_{min}}{NL}
\end{equation}

\begin{equation}
\frac{RP(A,B_{\sPAV{}})}{RP(A,B_{\CC{}})} \geq \frac{1}{N}
\end{equation}

\end{proof}

The rest of the results for \AV{}, \CC{}, \PAV{} and \sPAV{} can be seen in Table~\ref{tab:guarantees_util} and Table~\ref{tab:guarantees_rep}. Their proofs are left for the appendix, where the upper guarantees follow the same idea as in Proposition~\ref{prop:seqpav}, and the rest take into advantage the voting rule properties as done for Proposition~\ref{prop:pav_util_low} (Propositions~\ref{prop:av_high},~\ref{prop:av_low},~\ref{prop:cc_high} follow the same outlines as~\citet{lackner2020utilitarian}).

\subsection{EJR Voting Rule Guarantees}
In this section we will present the utilitarian and representation guarantees for the family of EJR voting rules.







\begin{proposition}\label{prop:ejr_util_low}
Let $R$ be a voting rule that satisfies the EJR property. Then the utilitarian guarantee satisfies
$$\forall N,L,c_{min}:\ K^{R}_{SW}(N,L,c_{min}) \geq \frac{c_{min}}{NL}\lfloor\frac{L}{c_{max}}\rfloor.$$
\end{proposition}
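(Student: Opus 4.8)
The plan is to sandwich the ratio $K^R_{SW}(E)=SW(A,R(E))/\max_{B}SW(A,B)$ between an upper bound on the optimal welfare and a lower bound on the welfare of any EJR outcome, and then to check that their quotient is exactly $\frac{c_{min}}{NL}\lfloor L/c_{max}\rfloor$. For the denominator I would first observe that every feasible bundle $B$ contains at most $\lfloor L/c_{min}\rfloor\le L/c_{min}$ projects, since each project costs at least $c_{min}$ and $cost(B)\le L$. Hence each of the $N$ voters contributes at most $|B|\le L/c_{min}$ to $SW(A,B)=\sum_{i\in V}|A(i)\cap B|$, giving $\max_{B}SW(A,B)\le NL/c_{min}$. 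This part is immediate and uses nothing about EJR.

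The real work is the numerator: showing that any EJR outcome $B=R(E)$ has welfare at least $\lfloor L/c_{max}\rfloor$. Here I would combine the cohesiveness machinery (Def.~\ref{def:cohis}, Def.~\ref{def:EJR}) with the convention, recorded in the Remark after Def.~\ref{def:EJR}, that the rule exhausts the budget. The guiding observation is that $\lfloor L/c_{max}\rfloor$ is the smallest number of projects a budget-exhausting bundle can contain when every project is as expensive as possible: a maximal feasible bundle cannot leave room for an approved-but-unfunded project, so it must fund at least $\lfloor L/c_{max}\rfloor$ projects. I would then argue, via EJR, that these funded projects are ones that are actually approved, each contributing at least $1$ to the welfare. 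Concretely, whenever a group $S$ jointly approves a common set $T$ with $\frac{L}{N}|S|\ge cost(T)$, Def.~\ref{def:EJR} yields some $i\in S$ with $|A(i)\cap B|\ge|T|$, and I would accumulate these per-group guarantees until the welfare reaches $\lfloor L/c_{max}\rfloor$.

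Combining the two bounds gives $K^R_{SW}(E)\ge \lfloor L/c_{max}\rfloor\big/(NL/c_{min})=\frac{c_{min}}{NL}\lfloor L/c_{max}\rfloor$, and since this holds for every instance with $N$ voters, budget $L$, and minimum cost $c_{min}$, it survives taking the infimum that defines $K^R_{SW}(N,L,c_{min})$, which proves the claim.

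I expect the main obstacle to be the welfare lower bound, not the trivial upper bound. The difficulty is that EJR only bites through $T$-cohesive groups, and in instances with highly fragmented (e.g. pairwise disjoint) approvals there may be no cohesive group with a large common $T$, so that EJR alone is nearly vacuous; there the bound must instead be carried by budget exhaustion together with the assumption that the rule prefers approved to unapproved projects. I would therefore split the analysis: when a sufficiently large cohesive group exists I extract the welfare directly from Def.~\ref{def:EJR}, and otherwise I rely on exhaustion to guarantee that at least $\lfloor L/c_{max}\rfloor$ approved projects are funded. Care is also needed for degenerate instances (no approved project, or fewer than $\lfloor L/c_{max}\rfloor$ approved projects in total), where the optimal welfare is itself small; in those cases I would re-examine the ratio directly, since the loose denominator bound $NL/c_{min}$ is then far from tight and the inequality must be argued on its own.
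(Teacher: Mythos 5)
Your proposal follows essentially the same route as the paper's proof: upper-bound the optimal welfare by $N L/c_{min}$ (the paper uses $N\lfloor L/c_{min}\rfloor$), lower-bound the welfare of the EJR outcome by $\lfloor L/c_{max}\rfloor$ via the budget-exhaustion convention---since the largest cohesive set may be empty, EJR itself contributes nothing in the worst case---and take the quotient. If anything, you are more explicit than the paper about the two points where the argument leans on conventions beyond EJR (that the budget is exhausted using \emph{approved} projects, and the degenerate case of too few projects), which the paper's proof passes over silently.
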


\begin{proof}

To avoid trivialities, we consider rules that exhaust the entire budget (EJR does not require that).
First let us lower bound the SW of an EJR rule R with respect to some PB instance E. Let $T \subseteq P$ be the largest set of projects that is $T$-cohesive with respect to $E$. Let $B$ be a bundle in the outcome of $R$ and $B \subseteq B'$ its extension to consume the remaining budget. From Definitions \ref{def:cohis} and \ref{def:EJR} it readily follows that any bundle $B$ in the outcome of R satisfies $SW(A,B) \ge \lceil |T|\frac{N}{L} \rceil$. 

In the worse case $T=\emptyset$, namely R is EJR in an empty way. Since we assume that all the budget is consumed, then $B'$ (perhaps even $B$) contains at least $\lfloor\frac{L}{c_{max}}\rfloor$ projects (otherwise the budget is not consumed). Therefore $SW(A,B') \ge \lfloor\frac{L}{c_{max}}\rfloor$.

As for the bundle $B_{SW}$ that maximizes the social welfare, there are at most $\lfloor\frac{L}{c_{min}}\rfloor$ projects possible to fund, each one of them is supported by at most all $N$ voters. This means that
$$SW(\AV{})\leq N\lfloor\frac{L}{c_{min}}\rfloor$$
Putting everything together we get that
$$K^{R}_{SW} \ge \frac{SW(A,B')}{SW(A,B_{SW})} \geq \frac{\lfloor\frac{L}{c_{max}}\rfloor}{N\lfloor\frac{L}{c_{min}}\rfloor}\geq\frac{c_{min}}{NL}\lfloor\frac{L}{c_{max}}\rfloor$$

\vspace{-5mm}
\end{proof}

The rest of the results for EJR voting rules in PB can be seen in Table~\ref{tab:guarantees_util} and Table~\ref{tab:guarantees_rep}. The proofs for those results are in Prop.~\ref{prop:ejr_rep} and Prop.~\ref{prop:ejr_util_high}


In order to have complete comparison of the PB results to multi-winner, we will also find the guarantees for EJR voting rules in multi-winner.
Notice that in  multi-winner context, $L\in\mathbb{N^+}$ and tell how many projects should be selected.

\begin{proposition}\label{prop:mw__high_in_text}
Let $R$ be a voting rule that satisfies the EJR property. Then the utilitarian and representation guarantees satisfies \rmr{you say utilitarian bu show for both}
$$\forall N,L,\  s.t.\  N\geq 2L:\ K^{R}_{SW}(N, L, c_{min}=1, c_{max}=1)\leq \frac{1}{N-L}$$ 
$$\forall N,L,\  s.t.\  N\geq 2L:\ K^{R}_{RP}(N,L, c_{min}=1, c_{max}=1) \leq \frac{1}{N-L}$$.
\end{proposition}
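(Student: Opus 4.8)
The plan is to prove this as an upper bound on the (worst case over) EJR rules by exhibiting a single multi-winner instance on which a valid EJR outcome is a factor $\frac{1}{N-L}$ away from the optimum, for welfare and representation \emph{simultaneously}. The mechanism is to make EJR satisfiable \emph{vacuously}: if every project is approved by strictly fewer than $N/L$ voters, then by Definition~\ref{def:cohis} no $T$-cohesive group exists (a group that is $T$-cohesive with $|T|\ge 1$ needs $|S|\ge N|T|/L\ge N/L$ voters jointly approving every project of $T$, in particular some single project). When no cohesive group exists, Definition~\ref{def:EJR} imposes no constraint, so an adversarial low-value bundle is as legitimate an EJR outcome as the optimal one. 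Thus an EJR rule that outputs the bad bundle witnesses the claimed ratio.

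Concretely, I would take $N$ voters, committee size (budget) $L$, unit costs, and $M=2L$ projects partitioned into ``good'' projects $g_1,\dots,g_L$, one ``cheap'' project $b$, and $L-1$ ``dummy'' projects $d_1,\dots,d_{L-1}$. I split $N-L$ of the voters into $L$ groups of size $\tfrac{N-L}{L}$, with group $j$ approving only $g_j$; one further voter approves only $b$; and the remaining $L-1$ voters approve nothing. The hypothesis $N\ge 2L$ gives $\tfrac{N-L}{L}\ge 1$, so every group is nonempty and every project is approved by at most $\tfrac{N-L}{L}<\tfrac{N}{L}$ voters. Hence EJR is vacuous on this instance, and in particular the bundle $\{b,d_1,\dots,d_{L-1}\}$ is a feasible, budget-exhausting EJR outcome.

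To finish, I would compare values. The optimum is $\{g_1,\dots,g_L\}$, which has welfare and representation exactly $N-L$ (each good voter approves exactly one selected project); replacing some $g_j$ by $b$ changes the value by $1-\tfrac{N-L}{L}\le 0$, so nothing beats $N-L$. The adversarial bundle $\{b,d_1,\dots,d_{L-1}\}$ has welfare $1$ and representation $1$. Therefore, for an EJR rule $R$ that outputs this bundle on this instance (and, say, the output of \RX{} elsewhere, so that $R$ is globally EJR), we get $K^R_{SW}(E)=K^R_{RP}(E)=\tfrac{1}{N-L}$, which upper-bounds $K^R_{SW}(N,L,1,1)$ and $K^R_{RP}(N,L,1,1)$ as required. (Using all dummies instead of $b$ would already give ratio $0\le\tfrac{1}{N-L}$; the single-approval project $b$ is what pins the ratio to exactly $\tfrac{1}{N-L}$, matching the $\Omega(1/N)$ lower bound of Prop.~\ref{prop:mw_rep_low}.)

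The main obstacle I expect is conceptual, not computational: one must argue that the bad bundle is a \emph{genuine} EJR outcome while the optimum remains large, and this is precisely where the threshold $N/L$ is pivotal. Any project popular enough to be forced into the committee (at least $N/L$ approvers) would by itself contribute $\ge N/L$ to both welfare and representation, ruining the ratio; keeping every project just below the threshold is exactly what lets the optimum reach $N-L$ while the EJR-enforced content stays at $0$. A secondary point of care is the quantifier: this bound describes the behavior of the \emph{worst} EJR rule (the class guarantee), and is consistent with the universal lower bound $\Omega(1/N)$ of Prop.~\ref{prop:mw_rep_low} that every EJR rule satisfies.
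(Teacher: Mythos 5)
Your proposal cannot be checked against the paper's own argument, because the proof the paper points to for this proposition (a companion appendix proposition) is absent from the source --- the reference is dangling --- so I compare it against the paper's analogous EJR upper bounds, Props.~\ref{prop:ejr_rep} and~\ref{prop:ejr_util_high}. Those PB constructions work by \emph{forcing}: single voters are cohesive for sets of cheap projects, so \emph{every} EJR rule is pushed into a low-value outcome. Your route is the opposite one --- keep every project's support strictly below $N/L$ so that EJR is \emph{vacuous}, and let an adversarially chosen EJR rule output a budget-exhausting bundle of value $1$. This inversion is not a stylistic choice but a necessity, and you correctly identify why: in the regime $N\ge 2L$ no singleton (indeed no group of fewer than $N/L$ voters) is cohesive, and no instance can force all EJR rules down to $\frac{1}{N-L}$; for example, the rule that runs \PAV{} on multi-winner instances and \RX{} elsewhere satisfies EJR on every PB instance yet inherits \PAV{}'s multi-winner utilitarian guarantee $\Omega(1/\sqrt{L})$, which is independent of $N$ (see Table~\ref{tab:guarantees_util}). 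Consequently the proposition can only be true under the reading you adopt --- it bounds what the EJR property \emph{alone} guarantees, witnessed by a worst-case EJR rule --- and your hybrid rule is the right formal object for that. Your arithmetic is correct: the optimum is $N-L$ for both measures, the adversarial bundle scores $1$ on both, and the ratio matches the claim while essentially matching the lower bounds of Props.~\ref{prop:mw_util_low} and~\ref{prop:mw_rep_low}.

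Three repairs are needed. (i) $(N-L)/L$ need not be an integer; split the $N-L$ voters as evenly as possible and note $\lceil (N-L)/L\rceil=\lceil N/L\rceil-1<N/L$, so cohesiveness still fails everywhere and the optimum is still exactly $N-L$. (ii) The guarantee quantifies over all pairs $(N,L)$ with $N\ge 2L$, so your rule must output the adversarial bundle on the whole family of instances, one per pair, and (say) \RX{} elsewhere; you describe only a single instance. (iii) Most substantively, the welfare half of your bound funds dummy projects approved by nobody. This is legal in the paper's formalism (a bundle is any feasible subset, and the paper itself notes that \RXeps{} exhausts the budget using unapproved projects), but it is in tension with the paper's matching lower bounds (Props.~\ref{prop:mw_util_low} and~\ref{prop:ejr_util_low}), whose proofs implicitly assume every funded project has at least one approver. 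Under that implicit convention your representation bound survives unchanged --- replace the dummies by $L$ junk projects all approved by one single voter, which keeps EJR vacuous and keeps the junk bundle's representation at $1$ --- but your welfare bound degrades to $\frac{L}{N-L}$, since any budget-exhausting bundle of approved projects has welfare at least $L$. You should state explicitly which convention you are working in, since the stated constant of the welfare bound depends on it.
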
\vspace{-3mm}
It is worth noting that while the bound in Prop.~\ref{prop:mw__high_in_text} (proof in Prop.~\ref{prop:mw__high}) can slightly improve on our bound for general PB problems (Prop.~\ref{prop:ejr_util_high}), asymptotically Prop.~\ref{prop:ejr_util_high} provides a tighter bound, that also shows some PB instances are worse (in terms of welfare) than any MW instance.    
\begin{proposition}\label{prop:mw_rep_low}
Let $R$ be a voting rule that satisfies the EJR property. Then the representation guarantee satisfies

$$\forall N:\ K^{R}_{RP}(N, c_{min}=1, c_{max}=1)\geq \frac{1}{N}$$
\end{proposition}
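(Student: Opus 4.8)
The plan is to bound the ratio $K_{RP}^R(E) = RP(A,R(E)) / \max_{B\in S(E)} RP(A,B)$ from below by controlling its two factors separately. First I would observe that the denominator is trivially at most $N$: every voter contributes at most $1$ to any representation score, so $\max_{B\in S(E)} RP(A,B)\le N$ for every instance. Consequently $K_{RP}^R(E)\ge RP(A,R(E))/N$, and it suffices to prove that any EJR outcome represents at least one voter, i.e. $RP(A,R(E))\ge 1$, in every instance in which at least one voter can be represented (when no voter is representable the ratio is $0/0$ and the bound holds vacuously).

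Second, I would produce one represented voter by exhibiting a $1$-cohesive group and invoking Definition~\ref{def:EJR}. In the multi-winner regime $c_{min}=c_{max}=1$, so $cost(T)=|T|$, and a group $S$ all approving a single project $p$ (so $T=\{p\}$, $|T|=1$) is $T$-cohesive exactly when $\frac{L}{N}|S|\ge 1$, i.e. when $p$ is approved by at least $N/L$ voters. If such a project exists, EJR guarantees some $i\in S$ has $|A(i)\cap B|\ge |T|=1$ for every bundle $B$ in the outcome, so that voter is represented and $RP(A,R(E))\ge 1$, yielding $K_{RP}^R(E)\ge 1/N$.

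Third, I would argue that the instances driving the infimum do contain such a heavily-approved project. The ratio is smallest when the optimum represents as many voters as possible while the rule is allowed to represent as few as possible, so the binding case is $\max_{B} RP(A,B)=N$: some feasible bundle of at most $L$ projects represents all $N$ voters. Assigning each voter to one of its approved projects in that bundle, a pigeonhole argument gives a project approved by at least $N/L$ voters, which is exactly the $1$-cohesive condition from the previous paragraph. Hence in the worst case a $1$-cohesive group is forced to exist, EJR represents at least one voter, and $K_{RP}^R(E)\ge 1/N$ as claimed.

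The main obstacle is precisely this existence step, since EJR constrains the outcome only when a cohesive group is present. In the degenerate regime where approvals are so sparse that no project collects $N/L$ approvers, no cohesive group exists, EJR imposes nothing, and an EJR rule could in principle fill its bundle with projects approved by no one. The pigeonhole step cleanly rules this out in the ratio-minimizing case $\max_B RP(A,B)=N$; to cover the remaining instances rigorously I would adopt the same convention as in the proof of Prop.~\ref{prop:ejr_util_low} and restrict attention to rules that exhaust the budget (and, equivalently, never waste a slot on an unapproved project when an approved one remains available, as is the case for concrete EJR rules such as Rule~X). Under this mild non-degeneracy, selecting any project already represents someone, so $RP(A,R(E))\ge 1$ and the bound follows.
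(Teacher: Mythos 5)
Your proposal is correct and follows the same basic decomposition as the paper's (very terse) proof: the optimal bundle represents at most $N$ voters, the rule's outcome represents at least one, hence the ratio is at least $1/N$. The difference is rigor. The paper simply asserts that ``the voting rule outcome represents at least a single voter,'' which, as you correctly observe, does not follow from EJR alone: when no cohesive group exists, EJR is vacuous, and a perverse EJR rule could fill its bundle with unapproved projects. Your pigeonhole argument is a genuine addition here: whenever some feasible bundle of at most $L$ unit-cost projects represents all $N$ voters, some project in it is approved by at least $N/L$ voters, that group is $\{p\}$-cohesive, and EJR by itself then forces $RP(A,R(E))\ge 1$ with no auxiliary assumption. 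For instances whose optimum is strictly between $1$ and $N$ your cohesiveness argument genuinely does not apply (e.g.\ $L=1$ with a single project approved by $N/2$ voters yields no cohesive group at all), and there you correctly fall back on the same convention the paper invokes in Prop.~\ref{prop:ejr_util_low}. Two small corrections to how you frame this: first, your claim that the infimum is ``driven by'' the full-representation instances is not justified as stated, but it is also unnecessary, since once you adopt the non-degeneracy convention it alone gives $RP(A,R(E))\ge 1$ for every instance containing a representable voter; second, budget exhaustion and non-wastefulness are not ``equivalent'' as you write (a rule can exhaust the budget entirely on unapproved projects), so you need both conditions, which is in fact what the paper implicitly uses as well.
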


This is the trivial guarantee, as the optimal outcome represents at most all voters, and the voting rule outcome represents at least a single voter.

\begin{proposition}\label{prop:mw_util_low}
Let $R$ be a voting rule that satisfies the EJR property. Then the utilitarian guarantee satisfies
$$\forall N:\ K^{R}_{SW}(N, c_{min}=1, c_{max}=1) \geq \frac{1}{N}.$$
\end{proposition}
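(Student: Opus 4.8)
The plan is to recognize that this statement is exactly the multi-winner specialization of the general PB lower bound in Prop.~\ref{prop:ejr_util_low} (setting $c_{min}=c_{max}=1$ there gives $\tfrac{1}{N L}\lfloor\tfrac{L}{1}\rfloor=\tfrac1N$), and to reprove it directly in the simpler unit-cost setting. Recall that in the multi-winner convention $L\in\mathbb{N}^+$ is precisely the number of projects that a budget-exhausting feasible bundle contains. By the infimum convention of Section~\ref{sec:wc}, it suffices to show $K^R_{SW}(N,M,L,1,1)\geq \tfrac1N$ for every fixed $M$ and $L$, since taking the infimum over $M,L$ then preserves the bound.

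First I would upper bound the denominator of the utilitarian ratio from Eq.~\eqref{eq:util_ratio}. Any feasible bundle contains at most $L$ projects, and each project is approved by at most all $N$ voters, so $\max_{B\in S(E)}SW(A,B)\leq N\cdot L$. Next I would lower bound the numerator $SW(A,R(E))$. As in the proof of Prop.~\ref{prop:ejr_util_low}, I restrict attention to rules that exhaust the budget, so the outcome bundle $B=R(E)$ contains exactly $L$ projects. If the instance contains at least $L$ projects that are each approved by some voter, the budget is exhausted using such projects and each contributes at least $1$ to the welfare, giving $SW(A,R(E))\geq L$; combined with the denominator bound this yields a ratio of at least $\tfrac{L}{N L}=\tfrac1N$. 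In the remaining case, where fewer than $L$ projects are approved by anyone, the optimal bundle $B_{SW}$ can only draw welfare from those same approved projects, so the rule already attains the optimum and the ratio equals $1\geq \tfrac1N$.

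The main (mild) obstacle is justifying the numerator bound $SW(A,R(E))\geq L$: it rests on the convention, already invoked for Prop.~\ref{prop:ejr_util_low} and discussed in the Remark of Section~\ref{sec:propvr}, that a budget-exhausting rule fills its bundle with approved projects whenever enough of them exist (as Rule~X does, by assigning only a negligible gain to unapproved projects). Note that EJR itself is not actually needed for this lower bound; it holds for any rule of this kind, and hence in particular for every EJR rule. In this sense the argument mirrors the trivial representation bound of Prop.~\ref{prop:mw_rep_low}, the only difference being the extra factor of $L$ which cancels between the $\leq N L$ bound on the optimal welfare and the $\geq L$ bound on the welfare of the rule's outcome.
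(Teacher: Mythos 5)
Your proof is correct and follows essentially the same route as the paper's: the paper proves this proposition by literally the same argument as Prop.~\ref{prop:ejr_util_low} (budget exhaustion forces at least $\lfloor L/c_{max}\rfloor = L$ funded projects, each contributing at least $1$ to welfare, against an optimum of at most $N\cdot L$), which under $c_{min}=c_{max}=1$ is exactly your numerator/denominator pairing. Your explicit treatment of the case with fewer than $L$ approved projects, and your observation that EJR itself plays no role in the lower bound (only the budget-exhaustion convention does), are refinements the paper leaves implicit but they do not change the approach.
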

\noindent Proposition \ref{prop:mw_util_low} is proven the same way as Proposition~\ref{prop:ejr_util_low}.

\vspace{1mm}

We end this section with three conclusions that we draw from Table~\ref{tab:guarantees_util} and Table~\ref{tab:guarantees_rep}:
\begin{itemize}
    \item  The guarantees for \CC{} and \AV{} are the same for the multi-winner and PB settings (up to a $c_{max}$ factor in the representation lower bound of \AV{}). The case for \PAV{} and \PAV{} is very different. The PB guarantees are an order of magnitude lower, and for \PAV{} they also depend on $N$;  no multi-winner guarantee depends on $N$. This caused because they "ignore" the projects cost, however, less significant for \PAV{}, as it take the cost into account indirectly when solving the optimization problem.
\item Our results induce a nearly-strict order over the voting rules:
\begin{align*}
    \text{welfare:~~~~}&\AV{} \gg \PAV{} \gg \CC{} \gg EJR \gg \mbox{\PAV{}}\\
    \text{representation:~~~~}&\CC{} \gg \PAV{} \gg \AV{} \gg \mbox{\PAV{}}, EJR
\end{align*}
These two rankings are similar to the ones obtained in the multi-winner setting, except for \PAV{} which drops to the bottom when introducing costs (the PB setting). 
\item Most voting rules' guarantees depend on the budget and projects cost, while EJR voting rules guarantees depend on the number of voters and projects cost. This means that as the number of voters grows, the ``cost of proportionality" may be rising as well.
\end{itemize}


\section{Experimental Evaluation}\label{sec:exp}
 In this section we examine the performance of the rules in practice on real world and synthetic  data, beyond the worst-case scenario.

For every dataset and every voting rule we calculate the utilitarian ratio and representation ratio for all the instances in that dataset. We report the average and standard error. Tables \ref{tab:poland},\ref{tab:euc} and \ref{tab:party} report the results for the three datasets. 

{\bf Poland} A dataset that was taken from Pabulib.org~\cite{stolicki2020pabulib}, a library of PB instances available to the research community.
We looked at 130 instances that took place in different districts of  Warsaw, Poland, in the years 2017--2021. Each instance included between 50-10,000 voters (2,982 on average) and between 20-100 projects (36 on average). 

\newcommand{\newpar}[1]{\smallskip\noindent\textbf{#1}}
\newpar{Euclidean} This dataset consists of 1,000 synthetic PB instances, each containing  1,000 voters, 100 projects and a budget of $L=10^5$.
In a  city, most of the population lives near the city center, and so the location of a project is more likely to be there as well.
To this end, for every instance, the locations of the voters and projects are generated  randomly according to a 2-dimensional euclidean model~\cite{skowron2020participatory, elkind2017multiwinner, talmon2019framework}.
Each voter $v$ and project $p$ are given some location $\ell_v,\ell_p$ in the unit square $[0,1]\times [0,1]$, according to the normal distribution
\[
\mu= \begin{pmatrix}
  0.5 \\ 0.5
\end{pmatrix},
\Sigma=
\begin{pmatrix}
  0.2^2 & 0 \\ 0 & 0.2^2
\end{pmatrix}
\]

The costs of the projects are parameterized with two values $c_{min} \in [100,500]$, $c_{avg} \in [10^4,2\cdot 10^4]$, the minimum project cost and the average one, are both chosen uniformly at random. 

The project costs are chosen according to the following procedure. For every project $p_i$ choose $c_i$ from the exponential distribution with $\lambda = c_{avg}-c_{min}$ and the cost of $p_i$ will be $c_{min}+c_i$. This simulates a scenario with many cheap projects, and  a few expensive ones.

To create the approval profile  for each voter $i$, a number $a_i$ is chosen from the normal distribution with $\mu=10,\sigma=3$, and the set of projects $A(i)$ approved by voter $i$   consists of the $\max(a_i,1)$  closest  projects   to the location of voter $i$.


{\bf Party-list} This is also a synthetic dataset containing 1,000 party-list~\citep{lackner2020utilitarian} PB instances that satisfy the following condition: every pair of voters $i,j$, either approve the same list of projects, $A(i)=A(j)$, or don't approve any mutual one, $A(i) \cap A(j) = \emptyset$.
Each instance includes 200 voters which are split uniformly at random   into groups of sizes 5 to 20. 
Each group of voters  approves uniformly at random between 10 to 30 different projects. The cost of the projects is linear in the group size, such that the more voters a group has, the higher the cost of the projects they approve.

Due to the criteria above, PB instances in the party-list dataset contain large parties that tend to  approve expensive projects. Funding these projects will contribute significantly to the overall SW, but will consume large part of the budget, risking that small parties will not be represented, thus violating the EJR property.

\subsection{Voting Rules}
To apply the voting rules described in Section~\ref{sec:vr} to the PB datasets, we extended the python framework used by \citet{abcvoting}, originally designed  to find committees in the multi-winner setting. The \AV{}, \CC{} and \PAV{} were solved using linear programming with the  Gurobi solver~\cite{gurobi}.

For the sake of efficiency, instead of breaking ties for the worst-case ratio, we broke them at random. This allowed to test the voting rules on larger instances in reasonable time. On several random instances that we sampled, we did break ties for the worst-case and also for the best-case, and noted no significant change in the final score compared to the random policy.

One exception is \CC{} in the party-list dataset. In this dataset, it is possible to get 100\% representation with only a small portion of the budget spent. This leads to a variety of \CC{}-optimal bundles, achieving a wide range of welfare scores. The Gurobi solver selected, for unclear reasons, only solutions with high social welfare. To compensate for that, in this case only, we took the worst-case SW solution, and reported this result in Table \ref{tab:party} (we added a penalty for SW in the objective function and then ran Gurobi).

As mentioned in Section~\ref{sec:vr}, none of the voting rules  satisfies the EJR property. We use Rule~X~\cite{peters2020proportional} (\RX{}) as a representative from the EJR family. We  refer the reader to the original paper for a description of this rule \cite{peters2020proportional}. 
One caveat is that \RX{} does not necessarily exhaust the entire budget, in contrast to the other rules we consider. To allow a fair comparison with the other rules we will also consider two extension  to Rule-X.  The first variant, \RXeps{}, is described in \citet{peters2020proportional}. The second variant, \RXPAV{},  applies \RX{} to the PB instance, and runs PAV on the remaining budget over the unfunded projects. The outcome of \RXPAV{} is defined as the  union  of the outcomes of the  \RX{} and \PAV{} rules.

\subsection{Results}


\begin{figure}[t]
\begin{center}
\includegraphics[width=8cm]{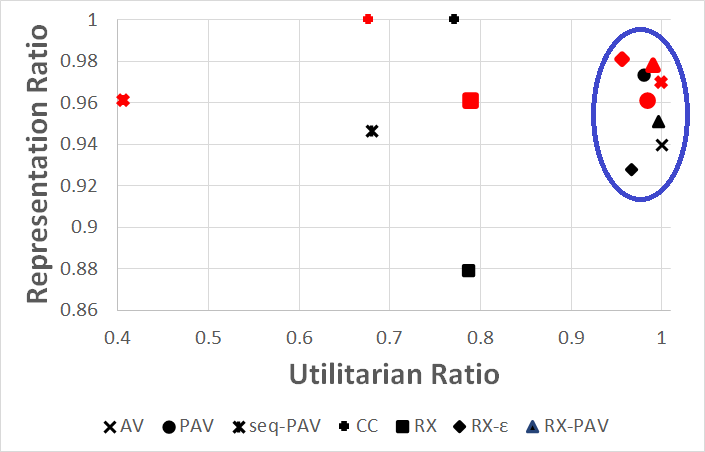}
\caption{Utilitarian ratio vs. Representation ratio results for the Poland (black) and Euclidean (bigger red).\rmr{impossible to read in BW print. Better use different shapes for rules and two colors (bright/dark) for the datasets}
}\label{fig:ratios_cluster}
\end{center}
\vspace{-5mm}
\end{figure}

Figure~\ref{fig:ratios_cluster} shows the trade-off between welfare and representation 
when applying the voting rules on Poland and Euclidean datasets (the party-list dataset was omitted in this figure as all voting rules except \AV{} gave an outcome with 100\% representation).
We see a cluster of voting rules (marked with blue circle) that includes \AV{}, \PAV{}, \RXeps{} and \RXPAV{}, achieving the best trade-off between welfare and representation. On the other, the outlier \sPAV{} achieves low welfare ratio in the Poland dataset and a very low welfare ratio in the Euclidean dataset.


Tables~\ref{tab:poland},\ref{tab:euc} and~\ref{tab:party} provide a finer level of granularity of the results, by displaying  the average ratios and percentages (with standard error\rmr{this is standard error! not standard deviation. It describes your confidence about the mean, not the population}) of PB instances that satisfy EJR for the Poland, Euclidean and Party-list datasets, respectively.  
As can be seen in the three tables, the ratios are similar across datasets. Specifically, both \PAV{} and \RXPAV{} succeed in achieving high utilitarian and representation ratios for all datasets. Another noticeable result, is the fact that \sPAV{} achieves quite poor results in both ratios,  which is in-line with the ranking that we presented at the end of Section \ref{sec:guar}. In addition, \sPAV{} exhibits large variance in all three  datasets (compared to the other voting rules), which further emphasizes that the \sPAV{} rule is unstable.

The results from the Poland  dataset in Table~\ref{tab:poland} demonstrate the disadvantage of using an EJR voting rule that does not guarantee to exhaust the budget.  For this dataset, \RX{} achieved poorer results for both welfare and representation compared to all other voting rules. In contrast, \RXeps{} and \RXPAV{}, which are similar to  \RX{}, but make sure to exhaust the entire budget, succeed in gaining a significant improvement in both measurements. This result emphasizes the benefit  of using the entire budget, even if the EJR requirement is satisfied before exhausting the budget.

Lastly, looking at percentage of instances where the chosen bundle satisfied EJR, we can see that for both the Poland and Euclidean datasets (Tables~\ref{tab:poland},\ref{tab:euc}), 
almost all rules succeed in getting an outcome which satisfies EJR. This  result is interesting, in that  even though a voting rule is not guaranteed to always produce a solution that satisfies EJR, this is often the case. 
This phenomenon may be explained by the fact that there are $T$-cohesive groups only for small sets of projects $T$ in those datasets, in other words voters are entitled to only a few projects. This makes the EJR requirement easier to satisfy. This property is satisfied for example in the Euclidean dataset where all projects have roughly the same, small, number of users that approve them.

In contrast to the above, in the party-list dataset (Table~\ref{tab:party}) most outcomes do not satisfy EJR (unless of course when the rule is part of the EJR family). \AV{}, \sPAV{} and \CC{} did not satisfy EJR in any instance.  \PAV{} satisfies EJR for about 80\% of the instances and provides a good utilitarian and representation ratios. However, the \RX{}-variants, \RXeps{} and \RXPAV{}, achieve the same ratios, but, also satisfy the EJR property,  making them  more attractive than \PAV{}.

The poor EJR percentages of \AV{}, \sPAV{} and \CC{} for the party-list dataset can be explained by observing the following. First, this dataset  forces large cohesive voter groups, making it more difficult for voting rules to satisfy EJR. Second, the PB instances contain projects that give high welfare or representation, but are also more expensive. The voting rules \AV{}, \sPAV{}, \PAV{} and \CC{} ignore cost, and thus by choosing such expensive projects, the budget is eaten fast, and small groups, with cheap projects, are not funded, violating the EJR property.

Finally, we consider the relationship between run-time and the guarantees. The rules \PAV{}, \RX{} and \RXeps{} run in polynomial  time  in  the number of projects and voters, while the rest of the voting rules take exponential time.
While \RXeps{} provides a good trade-off between all measurements,  in practice the run-time of this voting rule is an order of magnitude more time consuming than \emph{all} of the other voting rules, across all three datasets. 
Therefore, using the exponential-time \PAV{} or \RXPAV{} might be preferred in relatively small instances, where in practice we observed fast termination.

If we were to prepare a recommendation list for which rule to use when, taking into consideration the performance of the rules with respect to both guarantees and the run-time, then
the \PAV{} rule offers a good compromise across the board but it is computationally feasible only on small  instances. The \RXPAV{} rule achieves similar results to \PAV{} in addition to satisfying EJR, and it exhibits shorter run-time, since the exponential part of \RXPAV{} (the \PAV{} part) is applied to the remaining budget and unfunded projects (which is a much smaller instance). Hence \RXPAV{} is suitable both for small and medium instances.
Lastly, the rule \RXeps{} provides lower utilitarian and representation guarantees compared to \PAV{} and \RXPAV{} in addition to satisfying EJR, but it runs in polynomial time, making  it the rule of choice  for large instances.

\begin{table}[t]
  \begin{center}
    \begin{tabular}{l|c|c|c}
      & Utilitarian ratio  & Representation ratio & EJR\%\\
      \hline
      \AV{} & $1 \pm 0$ &	$0.9401 \pm 0.054$ &	100\\
      \PAV{} & $0.9802  \pm 0.025$	& $0.9735 \pm 0.026$ &	100\\
      \sPAV{} & $0.6801 \pm 0.18$ & $0.9468 \pm 0.073$ & 94.6\\
      \CC{} & $0.7713 \pm 0.12$ &	$1 \pm 0$ &	100\\
      \RX{} & $0.7868 \pm 0.065$ &	$0.8793 \pm 0.08$ &	100\\
      \RXeps{}  & $0.9666 \pm 0.025$ &	$0.9278 \pm 0.063$ &	100\\
      \RXPAV{} & $0.9966 \pm 0.006$ &	$0.9508 \pm 0.044$ &	100\\
    \end{tabular}
    \caption{\label{tab:poland}Ratios and percentage of instances which satisfy EJR  (Poland dataset)}
  \end{center}
  \vspace{-8mm}
\end{table}

\begin{table}[t]
  \begin{center}
    \begin{tabular}{l|c|c|c}
      & Utilitarian ratio  & Representation ratio & EJR\%\\
      \hline
      \AV{} & $1 \pm 0$ &	$0.9696 \pm 0.021$ &	100\\
      \PAV{} & $0.9845 \pm 0.011$ &	$0.9866 \pm 0.008$ &	100\\
      \sPAV{} & $0.4063 \pm 0.095$ &	$0.8423 \pm 0.089$ &	64.7\\
      \CC{} & $0.6768 \pm 0.075$ &	$1 \pm 0$ &	99.9\\
      \RX{} & $0.7890 \pm 0.037$ &	$0.9609 \pm 0.024$ &	100\\
      \RXeps{}  & $0.9569 \pm 0.017$ &	$0.9808 \pm 0.012$ &	100\\
      \RXPAV{} & $0.9905 \pm 0.007$ &	$0.9780 \pm 0.014$ &	100\\
    \end{tabular}
    \caption{\label{tab:euc}Ratios and percentage of instances which satisfy EJR for euclidean dataset}
  \end{center}
  \vspace{-8mm}
\end{table}

\begin{table}[t]
  \begin{center}
    \begin{tabular}{l|c|c|c}
      & Utilitarian ratio  & Representation ratio & EJR\%\\
      \hline
      \AV{} & $1 \pm 0$ &	$0.6628 \pm 0.044$ &	0\\
      \PAV{} & $0.8459 \pm 0.023$ &	$1 \pm 0$ &	79.2\\
      \sPAV{} & $0.7348 \pm 0.053$ &	$1 \pm 0$ &	0\\
      \CC{} & $0.6558 \pm 0.056$ &	$1 \pm 0$ &	0\\
      \RX{} & $0.8125 \pm 0.024$ &	$1 \pm 0$ &	100\\
      \RXeps{}  & $0.8579 \pm 0.023$ &	$1 \pm 0$ &	100\\
      \RXPAV{} & $0.8536 \pm 0.022$ &	$1 \pm 0$ &	100\\
    \end{tabular}
    \caption{\label{tab:party}Ratios and percentage of instances which satisfy EJR for party-list dataset}
  \end{center}
  \vspace{-10mm}
\end{table}


\vspace{-3mm}
\section{Conclusions and Future Work}\label{sec:conclusion}

We  presented a theoretical and empirical investigation of  the trade-off  between welfare and representation for different voting rules for participatory budgeting. 
From the theoretical perspective, we analyzed the worst-case guarantees of common voting rules from the literature. We show that when introducing costs to  projects, these guarantees do not generalize to the PB setting, with some rules (e.g., \sPAV{}) exhibiting  significantly lower guarantees than the multi-winner setting. 
From the empirical perspective, we show that some  proportional  voting rules  (namely \RXPAV{} and \RXeps{}) are able to  achieve high social  welfare and representation on real PB instances, in contrast to their theoretical guarantees.

Taking into consideration the trade-off between welfare and representation,   we concluded the  \PAV{} rule to be  the clear winner from both theory and practice perspectives, however, it is not proportional, and exhibits worst case exponential running time. This  led us to analyze  two variants of \RX{} (\RXPAV{} and \RXeps{}) that exhibit proportionality and provide similar results to \PAV{} in practice, despite their   lower theoretical guarantees (expressing the cost of proportionality). 
Specifically, we claimed that   \RXPAV{} is suitable for solving medium sized PB instances, while \RXeps{} is suitable for large PB instances as a result of their run-time.
 Our results provide a deeper understanding of the trade-offs between welfare and representation for  voting rules in Participatory Budgeting and can lead to more efficient outcomes that will satisfy the citizens.

There are several directions that are interesting to explore in  future work. First, extending our analysis to consider additional voting rules from the  literature, and considering more families of rules (e.g. voting rules with constrain on minority representation\rmr{would be nice to give an example of a different `class' of rules}).
Second, 
 finding  sufficient conditions on PB instances such that a voting rule
 would satisfy EJR. 
This could be a way to reconcile proportionality with other requirements, albeit for a restricted class of participatory budgeting problems.

\section{Acknowledgements}
This work was supported in part by Israeli Science Foundation (ISF) Grant No. 773/16.

Thanks to Jannik Peters who found a mistake in the EJR upper bounds in the original version.






\bibliographystyle{ACM-Reference-Format} 
\bibliography{ref.bib}


\clearpage
\appendix

\section{Proof details from Section~\ref{sec:guar}}

\begin{proposition}\label{prop:av_low}
$\forall L,c_{min},c_{max}:\  K^{\AV{}}_{RP}(L,c_{min},c_{max}) \geq \frac{c_{min}^2}{Lc_{max}}$

\end{proposition}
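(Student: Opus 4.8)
The plan is to lower-bound the representation achieved by \AV{} by relating it to the (maximal) social welfare that \AV{} is guaranteed to attain, and then to compare this welfare against the representation of the representation-optimal bundle. Fix a PB instance $E \in \mathcal{E}(N,M,L,c_{min},c_{max})$, let $B_{\AV{}}$ be any social-welfare-maximal feasible bundle (in particular the one of worst representation, to respect the tie-breaking rule in the definition of $K_{RP}$), and let $B_{RP}$ be a representation-optimal feasible bundle.

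First I would establish the pointwise inequality $RP(A,B) \ge SW(A,B)/|B|$ for every nonempty bundle $B$: each voter contributes at most $|B|$ to $SW(A,B)=\sum_{i}|A(i)\cap B|$ and is counted exactly once in $RP(A,B)$ whenever $|A(i)\cap B|\ge 1$, so $SW(A,B)\le |B|\cdot RP(A,B)$. Since every feasible bundle contains at most $\lfloor L/c_{min}\rfloor \le L/c_{min}$ projects (exactly as used in the proof of Lemma~\ref{lemma:pav}), this yields $RP(A,B_{\AV{}}) \ge \frac{c_{min}}{L}\,SW(A,B_{\AV{}})$. Next I would lower-bound $SW(A,B_{\AV{}})$ by the optimal representation: because \AV{} maximizes social welfare and $B_{RP}$ is feasible, $SW(A,B_{\AV{}}) \ge SW(A,B_{RP})$, and since $\min(1,x)\le x$ termwise we have $SW(A,B_{RP}) \ge RP(A,B_{RP})$. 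Chaining these gives $RP(A,B_{\AV{}}) \ge \frac{c_{min}}{L}RP(A,B_{RP})$, i.e.\ $K^{\AV{}}_{RP}(E)\ge \frac{c_{min}}{L}$. As $c_{min}\le c_{max}$, we have $\frac{c_{min}}{L}\ge \frac{c_{min}^2}{Lc_{max}}$, so taking the infimum over $\mathcal{E}(N,M,L,c_{min},c_{max})$ and then over $N,M$ establishes the claim.

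There is no deep obstacle here; the only points that require care are (i) the direction of the size bound—one must use the crude $|B_{\AV{}}|\le L/c_{min}$ rather than any cost-weighted count of the projects—and (ii) confirming the argument survives worst-case tie-breaking, which it does because every social-welfare-maximal bundle shares the same (maximal) welfare, so the welfare chain $SW(A,B_{\AV{}})\ge SW(A,B_{RP})\ge RP(A,B_{RP})$ applies verbatim to the worst-representation maximizer picked by the tie-break. I would remark that the extra factor $c_{min}/c_{max}$ in the stated bound is in fact slack: the argument above yields the stronger $K^{\AV{}}_{RP}(L,c_{min},c_{max})\ge \frac{c_{min}}{L}$, which reduces to the $\Omega(1/L)$ multi-winner guarantee exactly when $c_{min}=c_{max}=1$.
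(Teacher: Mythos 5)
Your proof is correct, and it takes a genuinely different route from the paper's. The paper's argument is local: it considers the most-approved project $p_{max}$ (with $N_{max}$ approvals), argues that an \AV{}-optimal bundle either contains $p_{max}$ or must contain a set of at most $\lfloor c_{max}/c_{min}\rfloor$ projects whose combined welfare is at least $N_{max}$ (so some single funded project, and hence the bundle, represents at least $\frac{N_{max}c_{min}}{c_{max}}$ voters), and then compares this against the crude bound $RP(A,B)\leq \lfloor L/c_{min}\rfloor N_{max}$ that holds for every feasible bundle, giving exactly $\frac{c_{min}^2}{Lc_{max}}$. You instead use the global chain $RP(A,B_{\AV{}}) \geq \frac{c_{min}}{L}\,SW(A,B_{\AV{}}) \geq \frac{c_{min}}{L}\,SW(A,B_{RP}) \geq \frac{c_{min}}{L}\,RP(A,B_{RP})$, which is the direct generalization of the multi-winner argument of \citet{lackner2020utilitarian}, with $|B|\leq L/c_{min}$ playing the role of the committee size; your treatment of tie-breaking is also sound, since every bundle in the \AV{} outcome has the same maximal welfare, so the chain applies verbatim to the worst-representation maximizer. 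What your route buys is substantial: it proves the strictly stronger guarantee $K^{\AV{}}_{RP}(L,c_{min},c_{max})\geq \frac{c_{min}}{L}$ with no $c_{max}$ dependence at all, which matches the upper bound of Proposition~\ref{prop:av_high} up to a constant factor and thereby closes the gap the paper explicitly leaves open (the ``up to a $c_{max}$ factor'' caveat in its conclusions and in Table~\ref{tab:guarantees_rep}); the paper's swap-style argument gains nothing in return, being strictly looser whenever $c_{min}<c_{max}$. The only caveat---which affects the paper's own proof equally---is the degenerate instance in which no feasible project is approved by any voter: there both the numerator and denominator of the ratio vanish and $B_{\AV{}}$ may be empty, so the statement should be read as excluding (or trivially granting) that case.
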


\begin{proof}
Lets note by $p_{max}$ the project which was approved by most voters. A bundle chosen by \AV{}, must either contain $p_{max}$ or contain several cheaper projects $\{P_1,\ldots,P_x\}$ such that their total \AV{} score will be at least $N(p_{max}):=N_{max}$. We will notice that in the worst case, all projects will represent the same voters thus achieving the lowest representation. Since those projects must have certain welfare, the bigger $x$ is the less voters the projects must represent. The largest number of projects that we can have happen when $p_{max}$ is the most expensive project and cost $c_{max}$ and all other projects are the cheapest and cost $c_{min}$, resulting with $X=\lfloor\frac{c_{max}}{c_{min}}\rfloor$.

In the worst case all of those projects represent the same voters, i.e. they will represent at least $\lceil\frac{N_{max}}{\lfloor\frac{c_{max}}{c_{min}}\rfloor}\rceil \geq \frac{N_{max}c_{min}}{c_{max}}$.

In addition, since there can be at most $\lfloor\frac{L}{c_{min}}\rfloor$ projects funded, we get for any chosen bundle B:
$$RP(A, B)\leq\lfloor\frac{L}{c_{min}}\rfloor N_{max}\leq\frac{LN_{max}}{c_{min}}$$

This results with:
$$\frac{RP(A,B_{\AV{}})}{RP(A,B_{RP})}\geq \frac{N_{max}c_{min}}{c_{max}} / \frac{LN_{max}}{c_{min}} = \frac{c_{min}^2}{Lc_{max}}$$
\end{proof}

\begin{proposition}\label{prop:av_high}
$\forall L,c_{min}:\  K^{\AV{}}_{RP}(L,c_{min})\leq \frac{c_{min}}{L}$

\end{proposition}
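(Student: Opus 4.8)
The plan is to establish this upper bound the same way the paper handles its other upper bounds (cf.\ Prop.~\ref{prop:seqpav}): by exhibiting a single family of PB instances on which \AV{} is forced, under the worst-case tie-breaking of Section~\ref{sec:wc}, to represent far fewer voters than is feasible. Since $K^{\AV{}}_{RP}(L,c_{min})$ is an infimum over $N,M,c_{max}$, one well-chosen instance for each pair $(L,c_{min})$ suffices.

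\textbf{Construction.} First I would fix $L,c_{min}$ and set $k=\lfloor L/c_{min}\rfloor$, the largest number of projects a feasible bundle can contain. Let every project cost exactly $c_{min}$ (so $c_{max}=c_{min}$, a legal value inside the infimum, corresponding to a uniform-cost instance). Introduce one ``greedy'' voter $v_0$ who alone approves $k$ distinct projects $p_1,\dots,p_k$, together with $N-1$ further voters $v_1,\dots,v_{N-1}$, each approving a single private project $q_i$ that no other voter approves; choose $N-1\ge k$. In this instance every project has exactly one approver.

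\textbf{Analysis.} Because each project is approved by exactly one voter, every feasible bundle of $k$ projects has social welfare exactly $k$, and no feasible bundle can exceed this (at most $k$ projects, each contributing at most $1$). Hence the set of \AV{}-optimal bundles contains $\{p_1,\dots,p_k\}$, and since ties are resolved toward the outcome of least representation, \AV{} returns $\{p_1,\dots,p_k\}$, which represents only $v_0$: thus $RP(A,B_{\AV{}})=1$. On the other hand, $\{q_1,\dots,q_k\}$ is feasible and represents $k$ distinct voters, and no feasible bundle can represent more than $k$ voters (at most $k$ single-approver projects). Therefore the representation ratio of this instance equals $1/\lfloor L/c_{min}\rfloor$; treating the floor as in the paper's $O(1/L)$ convention (with equality when $c_{min}\mid L$) this yields the claimed $K^{\AV{}}_{RP}(L,c_{min})\le \frac{c_{min}}{L}$.

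\textbf{Main obstacle.} The one delicate point is that \AV{} is welfare-maximizing and is a priori free to pick \emph{any} optimal bundle, so the whole argument rests on the worst-case tie-breaking convention. I would emphasize why this convention is essential rather than cosmetic: any attempt to \emph{force} \AV{} onto $v_0$'s projects by strict welfare dominance (e.g.\ adding extra approvers to $p_1,\dots,p_k$) simultaneously lets the representation-optimal bundle ``reuse'' one of those projects to cover the entire greedy group, loosening the ratio by a constant factor and destroying tightness. Keeping every project single-approver is exactly what pins the numerator at $1$ while the denominator reaches the maximal $\lfloor L/c_{min}\rfloor$, and the residual floor versus $L/c_{min}$ is the same mild rounding already absorbed into the $O(1/L)$ entry of Table~\ref{tab:guarantees_rep}.
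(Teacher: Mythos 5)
Your proof is correct under the paper's conventions, but it takes a genuinely different route from the paper's own argument. You make every project single-approver, so that \emph{all} bundles of $k=\lfloor L/c_{min}\rfloor$ projects are welfare-optimal, and you then invoke the worst-case tie-breaking of Section~\ref{sec:wc} to select the optimal bundle concentrated on the greedy voter $v_0$. The paper instead builds an instance in which the \AV{}-winning bundle is \emph{unique}: there are $m=L/c_{min}$ disjoint voter groups, each approving its own $m$ projects of cost $L/m$, where the first group has $x+1$ voters and every other group has $x$; strict welfare dominance forces \AV{} onto the first group's projects (welfare $m(x+1)$, versus at most $(m-1)(x+1)+x=m(x+1)-1$ for any other feasible bundle), giving representation $x+1$ against the optimum $mx+1$, and letting $x\to\infty$ yields the bound. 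What each approach buys: yours is simpler and pins the constant exactly (when $c_{min}$ divides $L$) without a limiting argument, but it is valid only because the guarantee is defined with adversarial tie-breaking --- under any tie-breaking that favors representation, your instance certifies nothing. The paper's construction is independent of the tie-breaking convention, which is a strictly stronger statement about \AV{}. Both proofs share the same mild divisibility caveat (the paper needs $L/c_{min}\in\mathbb{N}$, you need $c_{min}$ to divide $L$ for exactness), absorbed into the $O(1/L)$ entry of Table~\ref{tab:guarantees_rep}.

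One claim in your ``main obstacle'' paragraph is wrong and worth correcting: you assert that forcing \AV{} by strict welfare dominance necessarily loosens the ratio by a constant factor and destroys tightness. The paper's proof refutes exactly this. The fix is not to add approvers to $v_0$'s projects while keeping the remaining voters as singletons, but to scale \emph{everyone}: replace $v_0$ by a group of $x+1$ clones and each private voter by a group of $x$ clones, each group approving its own disjoint set of $m$ projects. Then \AV{}'s winner is unique, the representation-optimal bundle still covers all $mx+1$ voters, and the loss is only the factor $\left(1+\frac{1}{x}\right)$, which vanishes as $x\to\infty$. So worst-case tie-breaking is a convenience that your construction exploits, not an essential feature of the bound.
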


\begin{proof} 
Let $M=m^2$ for some $m \in \mathbb{N}$, and let $x \in \mathbb{N}_{+}$. Consider a PB instance $E=(A,\frac{L}{m},L)$ with $M$ projects and $N=mx + 1$ voters. We split the set of voters into $m$ groups (the first group includes $x+1$ voters and the rest $x$ voters), each group approves $m$ projects, without intersection between groups. Each project costs $\frac{L}{m}$, i.e., it is possible to fund at most $m$ projects.

The bundle $B_0$ which funds all projects of the first group belongs to $R_{\AV{}}(E)$; its social welfare is maximal at $SW(A,B_0)=m(x+1)$ and it representation score is $RP(A,B_0)=x+1$. In contrast, the optimal representation bundle $B_{opt}$ contains one project from each group, yielding $RP(A,B_{opt})=mx + 1$. Putting it together,
\begin{equation}\label{eq:av}
\frac{RP(A,B_0)}{RP(A,B_{opt})} = \frac{x+1}{mx + 1} \leq \frac{1}{m} + \frac{1}{mx} =\frac{c_{min}}{L} \left(1 + \frac{1}{x}\right) \le \frac{2c_{min}}{L}. 
\end{equation}


\end{proof}

\begin{proposition}\label{prop:cc_low}
$\forall L,c_{min}:\  K^{\CC{}}_{SW}(L,c_{min})\geq \frac{c_{min}}{L}$
\end{proposition}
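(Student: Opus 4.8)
The plan is to mirror the structure used for the welfare lower bound of \PAV{} (Prop.~\ref{prop:pav_util_low}): rather than comparing the rule's own objective to welfare directly, I would sandwich the social welfare between the representation score (which \CC{} actually optimizes) using two elementary inequalities, and then exploit that \CC{} maximizes representation. Fix a PB instance $E$, let $B_{\CC{}}$ be any bundle in the outcome of \CC{}, and let $B_{SW}$ be a welfare-maximizing bundle.

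First I would establish the two bounds relating $SW$ and $RP$ for an arbitrary feasible bundle $B$. On one side, since $\min(1,|A(i)\cap B|)\le |A(i)\cap B|$ for every voter $i$, summing over $V$ gives $RP(A,B)\le SW(A,B)$. On the other side, each represented voter contributes exactly $1$ to $RP$ but at most $|B|$ to $SW$, and feasibility forces $|B|\le\lfloor L/c_{min}\rfloor\le L/c_{min}$; hence $SW(A,B)\le \tfrac{L}{c_{min}}\,RP(A,B)$, equivalently $RP(A,B)\ge \tfrac{c_{min}}{L}\,SW(A,B)$.

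The second step is to chain these with the defining optimality of \CC{}. Because \CC{} maximizes representation, $RP(A,B_{\CC{}})\ge RP(A,B_{SW})$, so
\[
SW(A,B_{\CC{}})\ \ge\ RP(A,B_{\CC{}})\ \ge\ RP(A,B_{SW})\ \ge\ \frac{c_{min}}{L}\,SW(A,B_{SW}),
\]
where the first inequality applies $RP\le SW$ to $B_{\CC{}}$ and the last applies the welfare-to-representation bound to $B_{SW}$. Dividing by $SW(A,B_{SW})$ yields the utilitarian ratio $K^{\CC{}}_{SW}(E)\ge c_{min}/L$; since the whole chain holds for \emph{every} \CC{}-optimal bundle, the worst-case tie-breaking in the definition of $K_{SW}$ does not weaken the bound, and taking the infimum over $E\in\mathcal{E}(L,c_{min})$ gives the claim.

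There is no genuinely hard step here; the only thing to get right is the feasibility-driven bound $|B|\le L/c_{min}$ that converts the per-voter gap between welfare and representation into the factor $L/c_{min}$, which is precisely where the project costs enter (and why the bound degrades relative to the unit-cost case). As a sanity check I would also confirm the degenerate instances where no approved project can be funded, so that both numerator and denominator vanish and the ratio is vacuous, do not affect the stated guarantee.
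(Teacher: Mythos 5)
Your proposal is correct and follows essentially the same argument as the paper's proof: both sandwich welfare between representation scores via $RP(A,B)\le SW(A,B)$ and $RP(A,B)\ge \frac{c_{min}}{L}SW(A,B)$ (the latter coming from the feasibility bound $|B|\le L/c_{min}$), and then invoke the \CC{}-optimality of $B_{\CC{}}$ against the welfare-optimal bundle $B_{SW}$. Your added remarks on tie-breaking and degenerate instances are harmless refinements, not a different route.
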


\begin{proof}

The proof follows from the following general observation. Let $E$ be a PB instance and  $B_{SW}$ a bundle that maximizes the social welfare, with value $s$. Since the size of $B_{SW}$ is at most $L/c_{min}$, it implies that each voter contributes to the welfare at most $\frac{L}{c_{min}}$. Therefore, the number of voters $i$ such that $|A(i) \cap B_{SW}| > 0$ is at least $s \cdot \frac{c_{min}}{L}$.

The outcome $B_{\CC{}}$ of \CC{} maximizes the number of $i$'s such that $|A(i) \cap B_{\CC{}}| > 0$, and in particular it is larger than the number of $i$'s such that $|A(i) \cap B_{SW}| > 0$. The latter is at least $s \cdot \frac{c_{min}}{L}$. Therefore,
$$\frac{SW(A,B_{\CC{}})}{s} \ge \frac{RP(A,B_{\CC{}})}{s} \ge \frac{c_{min}}{L}.$$

\end{proof}

\begin{proposition}\label{prop:cc_high}
$\forall N,L,c_{min}:\ K^{\CC{}}_{SW}(L,c_{min})\leq \frac{4c_{min}}{L}$
\end{proposition}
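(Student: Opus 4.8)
The plan is to prove this upper bound on the worst-case guarantee by exhibiting, for every $L$ and $c_{min}$, a single PB instance whose utilitarian ratio for \CC{} is at most $\frac{4c_{min}}{L}$; since $K^{\CC{}}_{SW}(L,c_{min})$ is an infimum over instances, one such instance suffices. By rescaling I may assume $c_{min}=1$ and let every project cost $1$, so a feasible bundle may contain up to $k:=\lfloor\frac{L}{c_{min}}\rfloor$ projects. The construction dualizes the ``group'' idea of Prop.~\ref{prop:av_high}: whereas there \AV{} was tempted to concentrate welfare on one group and ignore coverage, here I want an instance where \CC{}, being coverage-maximizing, is \emph{forced} to spend its budget covering many voters exactly once (cheap in both representation and welfare) rather than concentrating welfare on a single popular block.

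Concretely I would take one large block $V_0$ of $s$ voters who all approve the same $k$ projects $q_1,\dots,q_k$, together with $k$ singleton voters $u_1,\dots,u_k$, where $u_j$ approves only a private project $r_j$. First I compute the welfare optimum: funding $\{q_1,\dots,q_k\}$ yields $SW=sk$, which is clearly best possible since each $q$-project contributes $s$ and each $r$-project only $1$. Next I analyze \CC{}: the maximum attainable representation is $s+k-1$, reached by funding one $q$-project (covering all of $V_0$) together with $k-1$ of the $r_j$; covering all $k$ singletons instead would consume all $k$ budget slots and leave $V_0$ uncovered, giving representation $k<s+k-1$ once $s\geq 2$. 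The crucial point is that \emph{every} representation-optimal bundle has exactly this structure, so each has welfare exactly $s+(k-1)$, and the tie-break over the minimum-welfare optimal bundle is therefore irrelevant.

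Combining the two, the ratio for this instance equals $\frac{s+k-1}{sk}$. Choosing $s\geq k$ gives $s+k-1\leq 2s$, so the ratio is at most $\frac{2}{k}$, and since $\lfloor\frac{L}{c_{min}}\rfloor\geq\frac{L}{2c_{min}}$ whenever $\frac{L}{c_{min}}\geq 1$, I conclude $K^{\CC{}}_{SW}(L,c_{min})\leq\frac{2}{k}\leq\frac{4c_{min}}{L}$. Together with Prop.~\ref{prop:cc_low} this pins down $K^{\CC{}}_{SW}=\Theta\!\left(\frac{c_{min}}{L}\right)$, and the constant $4$ arises solely from the slack $s+k-1\leq 2s$ and the floor in $k$, so no sharper estimate is needed.

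I do not expect a deep obstacle; the one step requiring genuine care is the ``forced structure'' claim. I must rule out any alternative representation-optimal bundle that could exceed $s+k-1$ (which would break the argument) while allowing ones whose welfare is merely higher (which is harmless). This reduces to the elementary fact that each singleton $u_j$ is reachable only through its private project $r_j$, so covering $t$ singletons costs exactly $t$ slots and covering $V_0$ costs at least one slot; under the $k$-project budget this caps representation at $s+k-1$ and fixes the welfare of the optimal \CC{} bundle, which is precisely what makes the ratio provably small.
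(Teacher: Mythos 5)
Your proof is correct and takes essentially the same approach as the paper: both establish the upper bound by constructing an explicit instance with a large block of voters who jointly approve a pool of popular projects plus singleton voters with private projects, so that every \CC{}-optimal bundle funds one popular project and fills the rest of the budget with singletons, while the welfare optimum funds the whole popular pool. The differences are cosmetic---you use uniform costs $c_{min}$ and a free block size $s \ge k$ with the floor $\lfloor L/c_{min}\rfloor$ handled explicitly, whereas the paper uses two cost levels $L/n$ and $L/(n+1)$---and both arguments yield the same bound, with the constant $4$ absorbing the rounding slack.
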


\begin{proof}
Consider the following PB instances: $V=\{v_1,\ldots,v_{2n+1}\}$, $P=\{p_1,\ldots,p_{2n+1}\}$, and the budget is L. The cost of projects $\{p_1,\ldots,p_{n}\}$ is $\frac{L}{n}$ and approved by all voters $\{v_1,\ldots,v_{n}\}$; the cost of projects $\{p_{n+1},\ldots,p_{2n+1}\}$ is $\frac{L}{n+1}$ and each one is approved by a single voter from $\{v_{n+1},\ldots,v_{2n+1}\}$.

Let  $k$  be the number of projects with cost $\frac{L}{n+1}$ that can be funded after funding one project with cost $\frac{L}{n}$, i.e. $k=\lfloor (L -\frac{L}{n})/\frac{L}{n+1}\rfloor$ = $\lfloor\frac{(n-1)(n+1)}{n}\rfloor=\lfloor\frac{n^2-1}{n}\rfloor$.

The bundle that maximizes SW is $B_{SW}=\{p_1,\ldots,p_n\}$ with $SW(A,B_{SW})=n^2$. An optimal bundle of \CC{} is given by   $B_{\CC{}}=\{p_1,p_{n+1},\ldots,p_{n+1+k}\}$. The social welfare score of $B_{\CC{}}$ is
$$SW(A,B_{\CC{}})=n+k= n+ \lfloor\frac{n^2-1}{n}\rfloor \leq n + \frac{n^2-1}{n}=2n - \frac{1}{n}.$$
Dividing by the optimal SW we get,
\begin{equation}\label{eq:cc_high}
\frac{SW(A,B_{\CC{}})}{SW(A,B_{SW})}\leq \frac{2n - \frac{1}{n}}{n^2} = \frac{2}{n} - \frac{1}{n^3} \leq \frac{2}{n} \le \frac{2c_{min}}{L-c_{min}}.
\end{equation}
The last inequality is from solving for $n$ in $c_{min} = \frac{L}{n+1}$.
If we assume that $c_{min} \leq L/2$ then Eq.~\eqref{eq:cc_high} is upper bounded by
$$\frac{SW(A,B_{\CC{}})}{SW(A,B_{SW})}\leq \frac{2c_{min}}{L - L/2} = \frac{4c_{min}}{L}.$$

If $c_{min} > L/2$ then there is only a single project that can be funded in which case finding optimal welfare or representation is a trivial task.
\end{proof}

\begin{proposition}\label{prop:pav_util_high}
$\forall N,L,c_{min}:\ K^{\PAV{}}_{SW}(L,c_{min})\leq \frac{c_{min}}{L}\left(\log\left(\frac{L}{c_{min}}\right) + 2\right)$
\end{proposition}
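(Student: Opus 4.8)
Since this is an \emph{upper} bound on the worst-case guarantee, the plan is to exhibit a single PB instance $E$ on which $K^{\PAV{}}_{SW}(E)\le\frac{c_{min}}{L}\left(\log\left(\frac{L}{c_{min}}\right)+2\right)$; taking the infimum over the omitted arguments $N,M,c_{max}$ then yields the claim. The guiding idea is to make the welfare-optimal bundle \emph{saturate} Lemma~\ref{lemma:pav}: that bound is tight precisely when all of a bundle's projects belong to a single voter, so I will give one voter a large stack of cheap projects whose \PAV{} score is only logarithmic in its welfare. The obstacle that dooms the naive version of this is that \PAV{} is happy to fund the first several of these cheap projects---each new one still adds a positive amount to $SC_{\PAV{}}$---so it never fully avoids them, and one recovers only the weaker multi-winner rate $\Theta(1/\sqrt{L})$. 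I will neutralise this leak with a single ``blocker'' project that costs the \emph{entire} budget, so that no feasible bundle can hold both the blocker and any cheap project.

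Concretely, let $k\coloneqq\lfloor L/c_{min}\rfloor$ (the largest number of cheapest projects one can afford) and $t\coloneqq\lceil H(k)\rceil$, where $H$ is the harmonic sum from the proof of Lemma~\ref{lemma:pav}. The instance has $N=t+1$ voters and $M=k+1$ projects: one voter $v_0$ who \emph{exclusively} approves $k$ projects $q_1,\dots,q_k$ of cost $c_{min}$ each, and $t$ further voters $u_1,\dots,u_t$ who all approve a single project $P$ of cost $L$ (so $c_{max}=L$) approved by nobody else. The budget is $L$.

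Every feasible bundle either contains $P$---and then, since $cost(P)=L$, it must equal $\{P\}$---or is a subset of $\{q_1,\dots,q_k\}$. The bundle $\{P\}$ has score $SC_{\PAV{}}(A,\{P\})=t$ (each $u_j$ gets one project) and welfare $t$; among subsets of the $q_i$'s the largest score is $SC_{\PAV{}}(A,\{q_1,\dots,q_k\})=H(k)$, with welfare $k$. Because $t=\lceil H(k)\rceil\ge H(k)$, the bundle $\{P\}$ maximises the \PAV{} score---strictly whenever $H(k)\notin\mathbb{N}$, and otherwise the worst-case tie-break (minimum welfare among optimal bundles) selects it anyway since $t<k$. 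Hence $B_{\PAV{}}=\{P\}$ has welfare $t$, while the welfare optimum is $B_{SW}=\{q_1,\dots,q_k\}$ with welfare $k$. Using $H(k)\le\log k+1$, the utilitarian ratio is $\frac{t}{k}\le\frac{\log k+2}{k}=\frac{c_{min}}{L}\left(\log\left(\frac{L}{c_{min}}\right)+2\right)$, where the last equality uses $k=L/c_{min}$ (exact under the paper's w.l.o.g. convention $c_{min}=1$, i.e.\ integral $L$).

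The crux is the blocker's full-budget price: had $P$ been cheaper, \PAV{} could spend the slack on the $q_i$'s and regain welfare, so charging exactly $L$ is what forces the clean either/or between concentration (\PAV{}'s choice) and breadth (the optimum). The only routine loose ends I expect are the rounding in $k=\lfloor L/c_{min}\rfloor$ and the degenerate integer case $H(k)\in\mathbb{N}$, both dispatched by the tie-break and the $c_{min}=1$ normalisation as above. Note also that this construction is the natural ``dual'' of Prop.~\ref{prop:pav_util_low}: the cheap-project bundle is exactly the instance that makes Lemma~\ref{lemma:pav} tight, and $P$ is engineered solely to steer \PAV{} away from it.
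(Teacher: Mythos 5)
Your proof is correct and is essentially the paper's own construction: a single voter approving a full budget's worth of cheapest projects, plus a ``blocker'' project costing exactly $L$ approved by just enough other voters that \PAV{} selects it, forcing welfare $\approx \log(L/c_{\min})+2$ against the optimum $\approx L/c_{\min}$. The only differences are cosmetic---the paper uses $\lfloor\log x\rfloor+2$ blocker voters (making the blocker a strict \PAV{} maximizer) where you use $\lceil H(k)\rceil$ plus the worst-case tie-break, and both proofs share the same implicit integrality assumption on $L/c_{\min}$.
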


\begin{proof}
Consider the following PB instance with a project set $P = \{p_1,\ldots,p_{x+1}\}$, voters $V=\{v_1,\ldots,v_{\lfloor \log(x)\rfloor + 3}\}$, and budget of $L$. The first voter wants the first $x$ projects, each costs $c_{min} = \frac{L}{x}$ and the rest of the voters want the last project which costs $L$.

The largest \PAV{} score is obtained for $B_{\PAV{}}=\{p_{x+1}\}$ because:

\begin{align*}
SC_{\PAV{}}(A,\{p_1,\ldots,p_x\}) = \sum_{i=1}^x\frac{1}{i} \leq \log(x) + 1 \\ \leq \lfloor\log(x)\rfloor + 2 = SC_{\PAV{}}(A,\{p_{x+1}\})
\end{align*}

The bundle $B_{SW}=\{p1,\ldots,p_x\}$ maximizes the social welfare with $SW(A,B_{SW}) = x = \frac{L}{c_{min}}$. Together we get,
$$\frac{SW(A,B_{\PAV{}})}{SW(A,B_{SW})} = \frac{\lfloor \log\left(\frac{L}{c_{min}}\right)\rfloor + 2}{\frac{L}{c_{min}}} \leq \frac{c_{min}}{L}\left(\log\left(\frac{L}{c_{min}}\right) + 2\right)$$

\end{proof}

\begin{proposition}\label{prop:pav_rep_low}
$\forall L,c_{min}:\ K^{\PAV{}}_{RP}(L,c_{min}) \geq \frac{1}{2\log(\frac{L}{c_{min}})}$

\end{proposition}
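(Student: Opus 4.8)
The plan is to establish a two-sided comparison between the \PAV{} score and the representation score of an \emph{arbitrary} feasible bundle, and then run the same ``a rule maximizes its own objective'' argument as in Prop.~\ref{prop:pav_util_low}, only now playing representation against the \PAV{} score in place of welfare.

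First I would record the pointwise bounds on the harmonic number: for every integer $k\geq 1$ we have $1\leq H(k)\leq 1+\log(k)$ (all logs base $e$). Writing $B_i=A(i)\cap B$ and noting that only voters with $|B_i|\geq 1$ contribute to either score, the lower bound $H(k)\geq 1$ gives, for every feasible bundle $B$,
$$RP(A,B)=\sum_{i:\,|B_i|\geq 1}1\ \leq\ \sum_{i:\,|B_i|\geq 1}H(|B_i|)=SC_{\PAV{}}(A,B).$$
For the reverse direction I would use that any feasible bundle has at most $\lfloor L/c_{min}\rfloor$ projects, so $|B_i|\leq L/c_{min}$ and hence $H(|B_i|)\leq 1+\log(L/c_{min})$ for each contributing voter. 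Summing over the $RP(A,B)$ contributing voters yields
$$SC_{\PAV{}}(A,B)\ \leq\ \bigl(1+\log(L/c_{min})\bigr)\,RP(A,B).$$

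With both inequalities in hand, let $B_{\PAV{}}$ maximize the \PAV{} score and $B_{\CC{}}$ maximize representation. Chaining the bounds gives
$$RP(A,B_{\PAV{}})\ \geq\ \frac{SC_{\PAV{}}(A,B_{\PAV{}})}{1+\log(L/c_{min})}\ \geq\ \frac{SC_{\PAV{}}(A,B_{\CC{}})}{1+\log(L/c_{min})}\ \geq\ \frac{RP(A,B_{\CC{}})}{1+\log(L/c_{min})},$$
where the middle inequality is optimality of $B_{\PAV{}}$ for the \PAV{} objective and the outer inequalities are the two bounds above. Since every bundle in the outcome of \PAV{} attains the maximal score, this lower bound on $RP(A,B_{\PAV{}})$ holds under the worst-case tie-breaking as well. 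Dividing by $RP(A,B_{\CC{}})$ gives a ratio of at least $1/(1+\log(L/c_{min}))$.

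Finally I would simplify the constant: whenever $L/c_{min}\geq e$ we have $\log(L/c_{min})\geq 1$, so $1+\log(L/c_{min})\leq 2\log(L/c_{min})$ and the ratio is at least $1/(2\log(L/c_{min}))$, as claimed; the degenerate case $L/c_{min}<e$ permits at most two projects and is handled directly (with a single fundable project $SC_{\PAV{}}=RP$, so \PAV{} is in fact representation-optimal). The main obstacle is the upper bound $SC_{\PAV{}}(A,B)\leq(1+\log(L/c_{min}))\,RP(A,B)$: this is the only place the cost and budget enter, through the uniform cap $|B_i|\leq L/c_{min}$ on how many approved projects a single voter can receive, exactly the bound that drove Lemma~\ref{lemma:pav}.
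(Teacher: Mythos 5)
Your proposal is correct and follows essentially the same route as the paper's own proof: sandwich the \PAV{} score between $RP(A,B)$ and $O\bigl(\log(L/c_{min})\bigr)\cdot RP(A,B)$ using the cap $|A(i)\cap B|\leq L/c_{min}$, then invoke optimality of the \PAV{} bundle for its own objective to transfer the bound to the representation-optimal bundle. In fact your write-up is somewhat more careful than the paper's (which states the cap with a reversed inequality sign, leaves the chaining through $SC_{\PAV{}}(A,opt(E))$ implicit, and silently assumes $L/c_{min}$ is large enough that the harmonic number is bounded by $2\log(L/c_{min})$, a point you handle explicitly).
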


\begin{proof}
Given a PB instance $E$, we will note by opt(E) the outcome with highest representation. For each voter $i$ it holds $|A(i)\cap B| \geq\frac{L}{c_{min}}$, therefore, $SC_{\PAV{}}(A,B) \leq SC_{\PAV{}}(\frac{L}{c_{min}})RP(A,B)\leq 2\log(\frac{L}{c_{min}})RP(A,B)$.

Which gives:

$$\frac{RP(A,\PAV{}(E))}{RP(A,opt(E))} \geq \frac{RP(A,\PAV{}(E))}{2\log(\frac{L}{c_{min}})RP(A,\PAV{}(E))} = \frac{1}{2\log(\frac{L}{c_{min}})}$$
\end{proof}

\begin{proposition}\label{prop:pav_rep_high}
$\forall N,L, s.t. \ N=\lfloor\log(L)\rfloor -1:\ K^{\PAV{}}_{RP}(L)\leq \frac{1}{\lfloor\log(L)\rfloor -1}$

\end{proposition}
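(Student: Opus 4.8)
The plan is to prove this upper bound by exhibiting a single PB instance whose PAV-representation ratio equals $\frac{1}{\lfloor\log L\rfloor-1}$; since $K^{\PAV{}}_{RP}(L)$ is an infimum over instances, one bad instance suffices to bound it from above. The guiding intuition comes from the matching lower bound (Prop.~\ref{prop:pav_rep_low}): a voter who approves $L$ cheap projects contributes PAV-score $H(L)\ge\log L$, whereas representing a fresh voter contributes only $H(1)=1$. So I want a construction in which the only way to reach the remaining voters is through a single \emph{expensive} shared project whose PAV-score, diminished by its small number of approvers, is undervalued relative to piling cheap projects onto one voter.

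Concretely, I would set $N=\lfloor\log L\rfloor-1$ and introduce projects $p_1,\dots,p_L$ of cost $c_{\min}=1$ each, all approved \emph{solely} by voter $v_1$, together with one project $p_0$ of cost $L-1$ approved by all the remaining voters $v_2,\dots,v_N$. First I would enumerate the feasible bundles. Any bundle that omits $p_0$ is a subset of $v_1$'s cheap projects, so it represents a single voter, and its PAV-score is maximized by funding all of them, giving score $H(L)$. Any bundle that includes $p_0$ (cost $L-1$) leaves budget exactly $1$, hence can add at most one further cheap project; its PAV-score is at most $(N-1)+H(1)=N$ and it represents at most $N$ voters.

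Then I would compare the two candidate optima. The concentrated bundle $\{p_1,\dots,p_L\}$ scores $H(L)\ge\log L\ge\lfloor\log L\rfloor=N+1>N$, so it strictly beats every bundle funding $p_0$. This forces the \PAV{} outcome to be the concentrated bundle, which represents exactly one voter, while the feasible bundle $\{p_0,p_1\}$ represents all $N$ voters; hence the optimal representation is $N$. Dividing, the ratio is $\frac{1}{N}=\frac{1}{\lfloor\log L\rfloor-1}$, which is the claimed bound (and it holds even under worst-case tie-breaking, since the concentrated bundle is the strict maximizer).

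The main obstacle I anticipate is calibrating the costs so that \emph{both} ``PAV concentrates'' and ``the optimum represents everyone'' hold at once, since these requirements pull in opposite directions: making $p_0$ cheaper would tempt PAV to fund it (its score $N-1$ is already close to $H(L)$), whereas making it cost a full $L$ would block the optimal bundle from also funding a project for $v_1$. Choosing $cost(p_0)=L-1$ is exactly the sweet spot that threads this needle, and the choice $N=\lfloor\log L\rfloor-1$ rather than $\lfloor\log L\rfloor$ is precisely what secures the strict inequality $H(L)>N$ with margin to spare.
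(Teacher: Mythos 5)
Your proof is correct and takes essentially the same approach as the paper's: a single bad instance in which one voter approves many unit-cost projects whose harmonic \PAV{} score $H(L)\geq \lfloor\log L\rfloor = N+1 > N$ strictly beats any bundle containing the expensive project shared by the remaining voters, so \PAV{} represents only one voter while the optimum represents all $N=\lfloor\log L\rfloor-1$. The only cosmetic difference is that in the paper the expensive project costs the full budget $L$ and is approved by \emph{all} voters (including $v_1$), so the representation-optimal bundle is that single project, whereas you price it at $L-1$ and exclude $v_1$, recovering full representation by adding one cheap project.
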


\begin{proof}
Consider PB instance with budget of $L$, $N=\lfloor\log(L)\rfloor -1 $ voters and $M=\lfloor L\rfloor$ projects.

The first project is approved by all voters and cost $L$, and the rest of the projects approved only by the first voters, each with cost of 1.

There are two possible bundles to fund, either taking only the first project, or taking all other projects, giving \PAV{} score of:

\begin{align*}
SC_{\PAV{}}(A,\{p_2,\ldots,p_{\lfloor L\rfloor + 1}\}) = N = \lfloor\log(L)\rfloor -1  \leq \sum_{i=1}^L\frac{1}{i} = SC_{\PAV{}}(A,\{p_1\}) 
\end{align*}

Therefore, the \PAV{}-winning bundle is $B_{\PAV{}}=\{p_2,\ldots,p_{\lfloor L\rfloor + 1}\}$ with $RP(A,B_{\PAV{}}) = 1$ and the \CC{}-winning bundle is  $B_{\CC{}}=\{p_1\}$ with $RP(A,B_{\CC{}})=N=\lfloor\log(L)\rfloor -1$

This mean:

$$\frac{RP(A,B_{\PAV{}})}{RP(A,B_{\CC{}})} = \frac{1}{\lfloor\log(L)\rfloor -1}$$

\end{proof}

\begin{proposition}\label{prop:ejr_rep} Let $R$ be a voting rule that satisfies the EJR property. Then the representation guarantee satisfies \\
$$\forall N: \ K^R_{RP}(N) \geq \frac{1}{N}$$ 
$$\forall N, L,c_{min}, \ s.t.\  N\geq\frac{L}{c_{min}}: \ K^{R}_{RP}(N, L,c_{min}) \le \frac{1}{N-1}$$

\end{proposition}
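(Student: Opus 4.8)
The statement has two parts; the lower bound is immediate and the upper bound needs a construction. For the lower bound $K^R_{RP}(N)\geq\frac{1}{N}$ I would argue exactly as for Prop.~\ref{prop:mw_rep_low}: any budget-exhausting outcome funds at least one approved project and hence represents at least one voter, while no feasible bundle can represent more than the $N$ voters that exist. Dividing a numerator of at least $1$ by a denominator of at most $N$ gives a ratio of at least $\frac{1}{N}$ on every instance, so the infimum over instances is at least $\frac{1}{N}$.

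For the upper bound the plan is to exhibit a single family of instances on which \emph{every} EJR-feasible bundle is forced to represent exactly one voter, while a feasible (non-EJR) bundle represents $N-1$ voters. Because the bad behaviour is shared by all EJR bundles, it bounds $K^R_{RP}$ for every EJR rule $R$ at once, which is what the quantifier ``for every $R$ satisfying EJR'' requires. Concretely I would take the threshold regime $c_{min}=L/N$ (so $N=L/c_{min}$, the boundary of the hypothesis) and build $E$ with one expensive project $p^{*}$ of cost $L$ approved exactly by voters $2,\dots,N$, together with a handful of cheap projects of cost $c_{min}$ approved only by voter $1$.

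I would then verify three facts. First, the only non-trivial cohesive group is the singleton $\{1\}$: since $\frac{L}{N}=c_{min}$, voter $1$ is $T$-cohesive for a single cheap project, whereas any $S\subseteq\{2,\dots,N\}$ has $\frac{L}{N}|S|\leq\frac{L}{N}(N-1)<L=cost(p^{*})$ and so is not entitled to $p^{*}$ (the only project its members share), and mixed groups share no project at all. Second, EJR therefore forces any outcome to fund at least one of voter $1$'s cheap projects; this consumes cost $c_{min}>0$, leaving strictly less than $L$, so $p^{*}$ can no longer be afforded. As voters $2,\dots,N$ approve only $p^{*}$, they go unrepresented and the outcome represents exactly voter~$1$. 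Third, the bundle $\{p^{*}\}$ is feasible and represents the $N-1$ voters $2,\dots,N$, so the optimal representation equals $N-1$. Combining these, every EJR outcome achieves ratio $\frac{1}{N-1}$ on $E$, giving $K^R_{RP}(N,L,c_{min})\leq\frac{1}{N-1}$.

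The main obstacle, and where I would spend the most care, is the second step: preventing the entitlement of the forced voter from being ``piggy-backed'' onto a high-representation bundle. This is precisely why I route the forced entitlement through voter~$1$, who is disjoint from the supporters of $p^{*}$, and why $p^{*}$ is priced at the full budget $L$, so that \emph{any} forced expenditure, however small, makes $p^{*}$ infeasible. I would also double-check the boundary arithmetic $N=L/c_{min}$ and note that the bound genuinely lives at this threshold: once $N$ exceeds $L/c_{min}$ no singleton is entitled to a cheapest project, so one can no longer pin the forced representation down to a single voter, and a representation-maximising EJR rule can recover the whole shared group.
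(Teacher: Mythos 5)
Your proof is correct and follows the same overall strategy as the paper's: the trivial lower bound (any rule funds at least one approved project, while the optimum represents at most $N$ voters), plus a construction in which a voter disjoint from the supporters of a single budget-consuming project is entitled, via cohesiveness, to a cheap project, so that every EJR bundle blocks the expensive project and represents exactly one voter. The details differ in an instructive way, however. The paper gives voter $1$ \emph{two} cheap projects of cost $\frac{L}{2N}$ each (so $\{1\}$ is cohesive for the pair) and prices the expensive project at $\frac{N-1}{N}L+\epsilon$; the optimal bundle can then also afford one cheap project, so the optimum represents all $N$ voters and the achieved ratio is $\frac{1}{N}$, slightly stronger than the claimed $\frac{1}{N-1}$. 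But the paper's instance has $c_{min}=\frac{L}{2N}$, i.e.\ $N=\frac{L}{2c_{min}}$, which does \emph{not} satisfy the proposition's own side condition $N\geq \frac{L}{c_{min}}$. Your construction, with one cheap project of cost exactly $\frac{L}{N}$ and an expensive project of cost exactly $L$, achieves ratio exactly $\frac{1}{N-1}$ and sits precisely on the boundary $N=\frac{L}{c_{min}}$ of the stated hypothesis, so it is the more faithful witness of the statement as written. Your closing observation is also correct and worth keeping: once $N$ strictly exceeds $\frac{L}{c_{min}}$, any nonempty cohesive group needs at least two members who jointly approve the entitling projects, so a representation-maximizing EJR rule can fund those shared projects and represent the whole group, and the ratio degrades to roughly $\frac{\lceil Nc_{min}/L\rceil}{N-\lceil Nc_{min}/L\rceil}$ rather than $\frac{1}{N-1}$. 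In other words, neither your argument nor the paper's covers all parameters allowed by the side condition; yours at least covers its boundary case, and your diagnosis correctly identifies that the condition is stated more broadly than what either construction supports.
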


\begin{proof}
Given a PB instance with $N$ voters and 3 projects. the first voter approve the first two  projects that cost $\frac{L}{2N}$ each, and the rest approve the third project that cost $\frac{N-1}{N}L + \epsilon$

According to EJR, the first votes should have at least 2 of the approve projects funded, resulting with not enough funds for the third project and only one voter represented. In contrast, the optimal, outcome will take the third project in addition to one of the first two projects, representing all voters. This results with:

$$\frac{RP(A,R(E))}{RP(A,cc(E))} = \frac{1}{N}$$.

The lower bound $K^{R}_{RP} \ge 1/N$ is immediate as any rule will fund at least 1 project and the optimal representation score is  at most $N$.

\end{proof}





\begin{proposition}\label{prop:ejr_util_high}
Let $R$ be a voting rule that satisfies the EJR property. Then the utilitarian guarantee satisfies
$$\forall N,L,c_{min}:\ K^{R}_{SW}(N,L,c_{min}) \leq \frac{4}{\sqrt{N}} - \frac{1}{N}.$$
\end{proposition}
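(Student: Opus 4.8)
The plan is to upper-bound the guarantee by constructing, for the given $N$, $L$, $c_{min}$ (I assume $L\ge Nc_{min}$; tighter budgets fund too few projects and are degenerate), one PB instance on which \emph{every} EJR outcome has welfare ratio at most $\tfrac{4}{\sqrt N}-\tfrac1N$. Because $K^R_{SW}(N,L,c_{min})$ is an infimum over instances, a single witness suffices, and since the welfare of all EJR bundles will turn out to be identical, the witness works simultaneously for every EJR rule $R$. Put $t=\lceil\sqrt N\,\rceil$ and let every relevant project cost $d:=L/N$. Split the voters into a \emph{big} group $B$ of $t$ voters and $N-t$ \emph{small} voters; create $N$ ``big'' projects, each approved by all of $B$, and give each small voter a single private project approved by that voter alone. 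If $c_{min}<d$ I add one dummy project of cost $c_{min}$ approved by nobody, only to realise the prescribed minimum cost.

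First I would read off the social optimum. The budget $L=Nd$ pays for exactly $N$ projects of cost $d$; a big project contributes welfare $t$ and a private project only $1$, so the welfare-maximising bundle funds $N$ big projects and $\max_{B'\in S(E)}SW(A,B')=Nt$. This bundle represents no small voter, and since each small voter $v_i$ is a $\{p_i\}$-cohesive singleton (indeed $\tfrac{L}{N}\ge cost(p_i)=d$, cf. Definition~\ref{def:cohis}), it violates EJR; thus no EJR rule reaches the optimum. Next I would pin down the welfare of every EJR bundle. By Definition~\ref{def:EJR}, each $\{p_i\}$-cohesive singleton forces its private project, so all $N-t$ private projects are funded. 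The group $B$ is $T$-cohesive for any set $T$ of $t$ big projects, because $\tfrac{L}{N}|B|=dt=cost(T)$; as a big voter approves only big projects, EJR forces at least $t$ big projects. These $N-t+t=N$ projects already exhaust $L$, so every EJR bundle is exactly the $N-t$ private projects plus exactly $t$ big projects, of welfare $(N-t)+t^2$.

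Finally I would divide and optimise over $t$. The ratio on this instance is
\[
\frac{N-t+t^2}{Nt}=\frac1t+\frac{t-1}{N},
\]
and $t=\lceil\sqrt N\,\rceil$ gives $\tfrac1t\le\tfrac1{\sqrt N}$ and $\tfrac{t-1}{N}\le\tfrac{\sqrt N}{N}=\tfrac1{\sqrt N}$, whence the ratio is at most $\tfrac2{\sqrt N}\le\tfrac4{\sqrt N}-\tfrac1N$ for all $N\ge1$ (for a perfect square $N$ the exact value $\tfrac2{\sqrt N}-\tfrac1N$ is attained). I expect the main obstacle to be the EJR-forcing step: one must argue carefully that EJR compels funding \emph{all} private projects and \emph{exactly} $t$ big ones --- the lower direction from the cohesive singletons and from $B$, and the upper direction from budget-tightness --- and check the boundary cases of the cohesiveness inequalities (which hold with equality here) together with the integrality of $t$ and the regime $L\ge Nc_{min}$ needed to keep $d\ge c_{min}$.
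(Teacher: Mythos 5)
Your proof is correct and takes essentially the same route as the paper's: a cohesive group of roughly $\sqrt{N}$ voters who all approve $N$ shared projects of cost $L/N$, plus roughly $N-\sqrt{N}$ singleton voters with private projects, so that EJR forces all private projects and only about $\sqrt{N}$ shared ones into the outcome, while the welfare optimum funds all $N$ shared projects, giving a ratio of order $2/\sqrt{N}\leq \frac{4}{\sqrt{N}}-\frac{1}{N}$. If anything, your handling of the forcing step (cohesive singletons plus budget-tightness) and of realising $c_{min}$ via a dummy project is more careful than the paper's brief argument, which uses $\lfloor\sqrt{N}\rfloor$ in place of your $\lceil\sqrt{N}\rceil$ but is otherwise the same construction.
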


\begin{proof}
Given a PB instance with $N$ voters and $2N - \lfloor\sqrt{N}\rfloor$ projects such that each one cost $\frac{L}{N}$. The first $\lfloor\sqrt{N}\rfloor$ voters approve projects $\{p_1,\ldots,p_N\}$ and the rest of the voters each approve a single project $\{p_{N+1},\ldots,p_{2N - \lfloor\sqrt{N}\rfloor}\}$.

From EJR definition, all singletons should be funded and the rest of the budget used for the rest of the projects. This gives:
$$SW(A,R(E)) = N - \sqrt{N} + \lfloor\sqrt{N}\rfloor\lfloor\sqrt{N}\rfloor \leq 2N - \sqrt{N}$$

While the optimal welfare is taking the first $N$ projects. This results with:

$$\frac{SW(A,R(E))}{SW(A,AV(E))} \leq \frac{2N - \sqrt{N}}{\lfloor\sqrt{N}\rfloor N} \leq \frac{4}{\sqrt{N}} - \frac{1}{N}$$
\end{proof}

\section{Rule~X description}\label{app:rx}

The voting rule Rule~X (\RX{}) recently introduced by \citet{peters2020proportional}. 
\RX{} is an iterative rule, which starts with ``allocating" each voter an equal share of the budget $\frac{L}{|V|}$, and  initialize an empty outcome $B=\varnothing$; then sequentially adds projects to $B$. At each step, in order to choose some project $p\in P\setminus B$, each voter needs to pay an amount that is proportional to her utility from the project, but no more than her remaining budget (note that with approval utilities this means only agents that approve the project pay). The  total payment should cover the cost of the project. 

Formally, let $b_i(t)$ be the amount of money that voter~$i$ is left with just before iteration~$t$.
 We say that some project $p\in P$, is $q$-affordable if $\exists q\in \mathbb{R}_+$ such that 
$$\sum_{i\in V}min(b_i(t),U_i(p)\cdot q)\geq cost(p)$$

Where
$U_i(p)= 
        \begin{cases}
            1             ,& \text{if } p\in A(i)\\
            0,             & \text{otherwise}
        \end{cases}$
is the utility of voter $i$ for project $p$.

If no candidate project is q-affordable for any $q$, Rule~X terminates and returns $B$. Otherwise it selects project $p^{(t)}\notin B$ that is $q$-affordable for a minimum $q$, where individual payments are given by $c_i(p^{(t)}):=\min\{b_i(t),u_i(p^{(t)})\cdot q\}$. Then we update the remaining budget as $b_i(t+1):=b_i(t)-c_i(p^{(t)})$.

The pseudocode for calculating the  qValue for a given project is shown in Algorithm~\ref{algo:qval}. 
The pseudocode for \RX{} is given in Algorithm~\ref{algo:rx}.

\begin{algorithm}[t]
\SetAlgoLined
\textbf{Input:} 
\begin{enumerate}
    \item project $p\in A$ \\ 
    \item $\forall i\in V,  U_i(p)$ \\
    
\end{enumerate}
\KwResult{q-value computation for project $p$}
\If {$\sum_{i\in v, U_i(p)>0}b_i(t)< cost(p)$}
 {$return \  \infty$}
 
 $current\_utility \leftarrow \sum_{i\in V}U_i(p)$
 
 $cost\_leftover \leftarrow cost(p)$
 
 $removed\_voters \leftarrow \varnothing$
 
 \While{True}{
    $current\_q \leftarrow cost\_leftover /    current\_utility$
    
    $voter\_removed \leftarrow False$
    
    \For{$i\in V\setminus removed\_voters$}{
      \If{$current\_q * U_i(p) > b_i(t)$}{
            $current\_utility \leftarrow current\_utility - U_i(p)$
            
            $cost\_leftover \leftarrow cost\_leftover - b_i(t)$
            
            $removed\_voters \leftarrow removed\_voters \cup \{i\}$
            
            $voter\_removed \leftarrow True$
       }
       
     }
     
     \If{$voter\_removed == False$}{
            $return \  (cost\_leftover / current\_utility)$
       }
 }

 \caption{qValue}\label{algo:qval}
\end{algorithm}

\begin{algorithm}
\SetAlgoLined
\textbf{Input:}
\begin{enumerate}
    \item $\forall i\in V, \forall g\in v_i, \forall p\in A,  u_{i,g}(p)$   \\
    \item Budget $L$ \\
    \item $\forall p\in A, cost(p)$
\end{enumerate}

\KwResult{Feasible bundle $B\subseteq A$}
 $B_0 \leftarrow \varnothing$

 $\forall i\in V: b_i(0) \leftarrow\frac{L}{|V|}$ 
  $t \leftarrow 1$

 \While{True}{
 
  $p^{(t)} \leftarrow \mathrm{argmin}_{p\in A\setminus B_{t-1}}[qValue(p, U_{[|A|]})]$
  
  \If{$qValue(p^{(t)}, U_{[|A|]})=\infty$}{
        $return \  B_{t-1}$ 
   }
   $B_{t} \leftarrow B_{t-1} \cup \{p^{(t)}\}$
   
   $\forall i\in V: b_i(t) \leftarrow \max\{0,b_i(t-1) -U_i(p^{(t)})\cdot q\}$
   
   $t \leftarrow t + 1$
   }
 \caption{Rule X}\label{algo:rx}
\end{algorithm}

\end{document}